\newcommand{\hide}[1]{}
\pgfplotsset{try min ticks=5}
\pgfplotsset{scaled y ticks=false}
\pgfplotsset{compat=newest}
\newlength\figHsmall
\newlength\figWsmall
\newlength\figHbig
\newlength\figWbig
\newcommand*{\ie}{i.e.}
\newcommand*{\eg}{e.g.,}
\begin{document}

\title{Safe Hierarchical Model Predictive Control and Planning for Autonomous Systems}

\author[1]{Markus K\"ogel}
\author[1]{Mohamed Ibrahim}
\author[2]{Christian Kallies}
\author[3]{Rolf Findeisen}
\authormark{K\"ogel, Ibrahim, Kallies, Findeisen}

\address[1]{\orgdiv{Laboratory for Systems Theory and Automatic Control},
\orgname{Otto von Guericke University}, \orgaddress{\city{Magdeburg},
\country{Germany}}}

\address[2]{\orgdiv{Institute of Flight Guidance}, \orgname{\newline German Aerospace Center (DLR)}, \orgaddress{\city{Braunschweig},
\country{Germany}}}

\address[3]{\orgdiv{Control and Cyber-Physical Systems Laboratory},
\orgname{Technical University of Darmstadt}, \orgaddress{
\country{Germany}}}

\corres{Rolf Findeisen, \email{rolf.findeisen@tu-darmstadt.de}, \orgaddress{\city{\newline TU Darmstadt}, \country{Germany}}}


\abstract[Abstract]{Planning and control for autonomous vehicles usually are hierarchical separated. However, increasing performance demands and operating in highly dynamic environments requires an frequent re-evaluation of the planning and tight integration of control and planning to guarantee safety. We propose an integrated hierarchical predictive control and planning approach to tackle this challenge. Planner and controller are based on the repeated solution of moving horizon optimal control problems. The planner can choose different low-layer controller modes for increased flexibility and performance instead of using a single controller with a large safety margin for collision avoidance under uncertainty. Planning is based on simplified system dynamics and safety, yet flexible operation is ensured by constraint tightening based on a mixed-integer linear programming formulation. A cyclic horizon tube-based model predictive controller guarantees constraint satisfaction for different control modes and disturbances. Examples of such modes are a slow-speed movement with high precision and fast-speed movements with large uncertainty bounds. Allowing for different control modes reduces the conservatism, while the hierarchical decomposition of the problem reduces the computational cost and enables real-time implementation. We derive conditions for recursive feasibility to ensure constraint satisfaction and obstacle avoidance to guarantee safety and ensure compatibility between the layers and modes. Simulation results illustrate the efficiency and applicability of the proposed hierarchical strategy.}

\keywords{Model Predictive Control, Planning, Hierarchies, Safety, Autonomous Systems, Obstacle avoidance}


\maketitle

\section{Introduction}\label{sec.intro}
Autonomous vehicles, such as drones, mobile robots, or autonomous transportation systems, are by now used in a wide range of applications, spanning from geo-surveillance, agricultural application, logistics, or search and rescue operations~\cite{paden2016survey,schwarting2018planning}. Often, the autonomous vehicle has to achieve a task, like to move from a start to a goal position, while avoiding obstacles in a dynamically changing environment, compare  Figure~\ref{fig:main_kay}.
\begin{figure}[htb]
	\begin{center}
		\includegraphics[width=0.7\columnwidth]{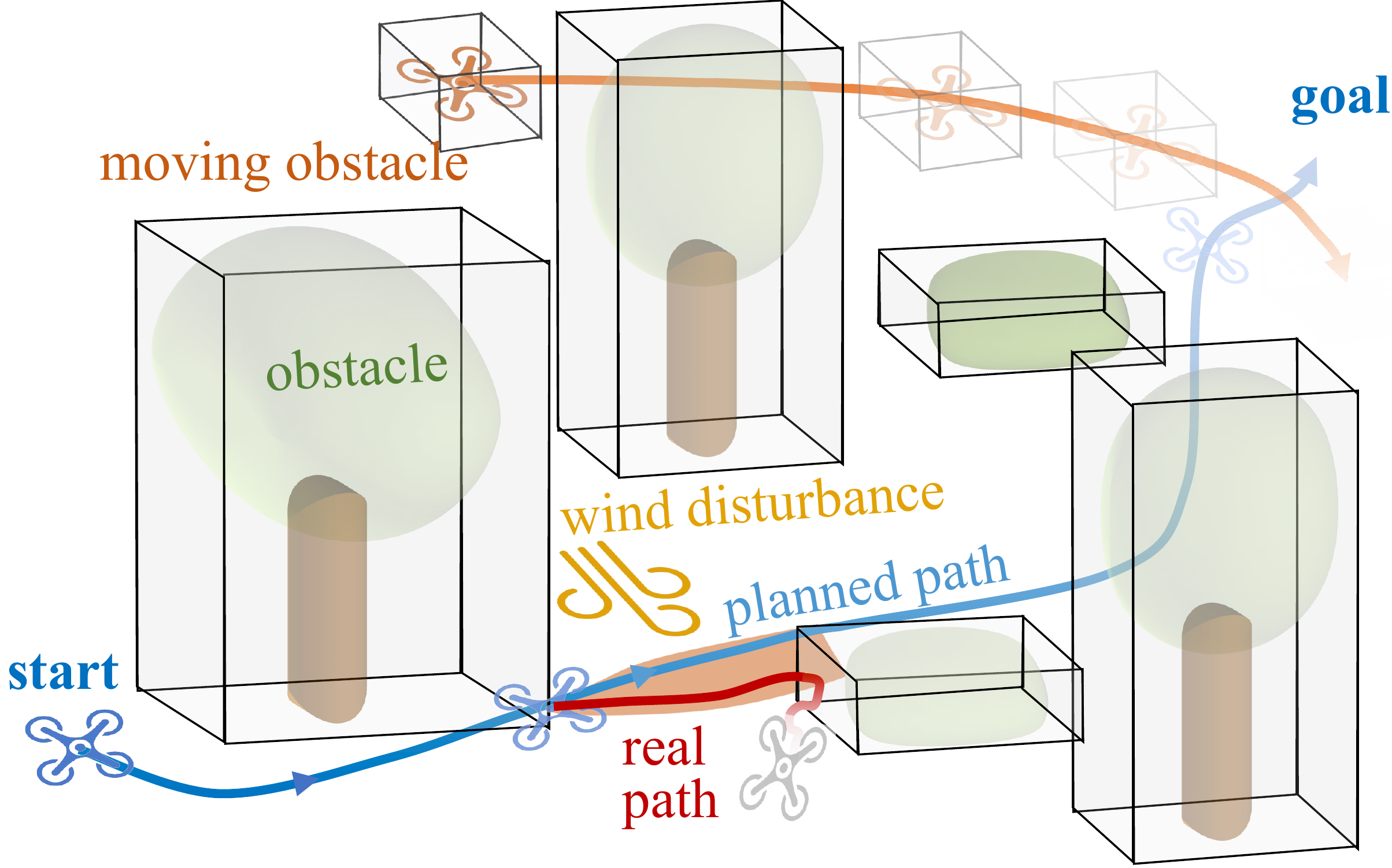}
		\caption{Planning and control for autonomous vehicles in uncertain, dynamically changing environments, e.g., moving obstacles or wind, with collision avoidance.} \vspace*{0.0cm}
		\label{fig:main_kay}
	\end{center}	
\end{figure}

Safe operation, i.e., collision avoidance, under all circumstances is challenging~\cite{dadkhah2012survey,kingston2018sampling,ibrahim2020real}. Typically, the task is hierarchically separated into a planning task performed once or repeatedly on a slow time scale. It provides a reference or path to a lower-layer control layer that operates on a fast time scale to counteract disturbances, model uncertainty, or react to (fast) environmental changes, c.f. Figure \ref{fig:blkdiag} (a), see also. ~\cite{rosolia2020unified,pant2021co,broderick2014optimal,ibrahim2019hierarchical,yin2020optimization,ibrahim2019improved,koeln2018two,vermillion2014stable}

Many planning approaches for autonomous vehicles exist, see, e.g., \cite{paden2016survey,goerzen2010survey,quan2020survey,mac2016heuristic} and references their-in. They, for example, exploit deterministic or heuristic search strategies, or reformulate the problem as a mathematical optimization problem. However, most of them do not allow considering detailed vehicle dynamics, environmental conditions, or disturbances explicitly.  
\begin{figure}[htb]
	\begin{center}
    \begin{subfigure}[t]{.49\textwidth}
		\centering
		\includegraphics[width=\textwidth]{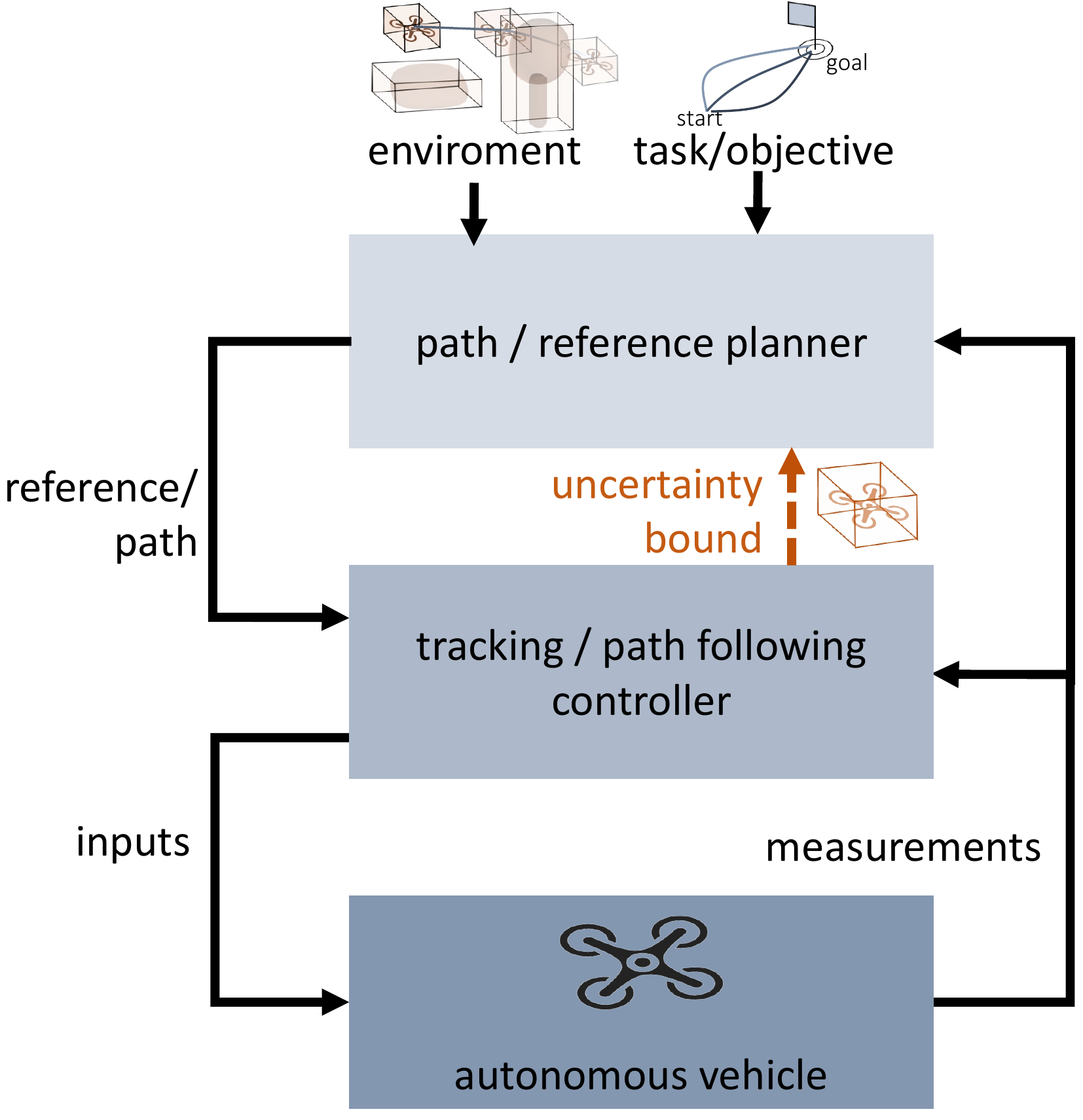}
		\captionsetup{font=normalsize,labelfont={bf,sf}}
		\caption{Typical decomposition of planning and control.}
		\label{fig:pf1}
	\end{subfigure}%
    \hfill
	\begin{subfigure}[t]{.49\textwidth}
		\centering
		\includegraphics[width=\textwidth]{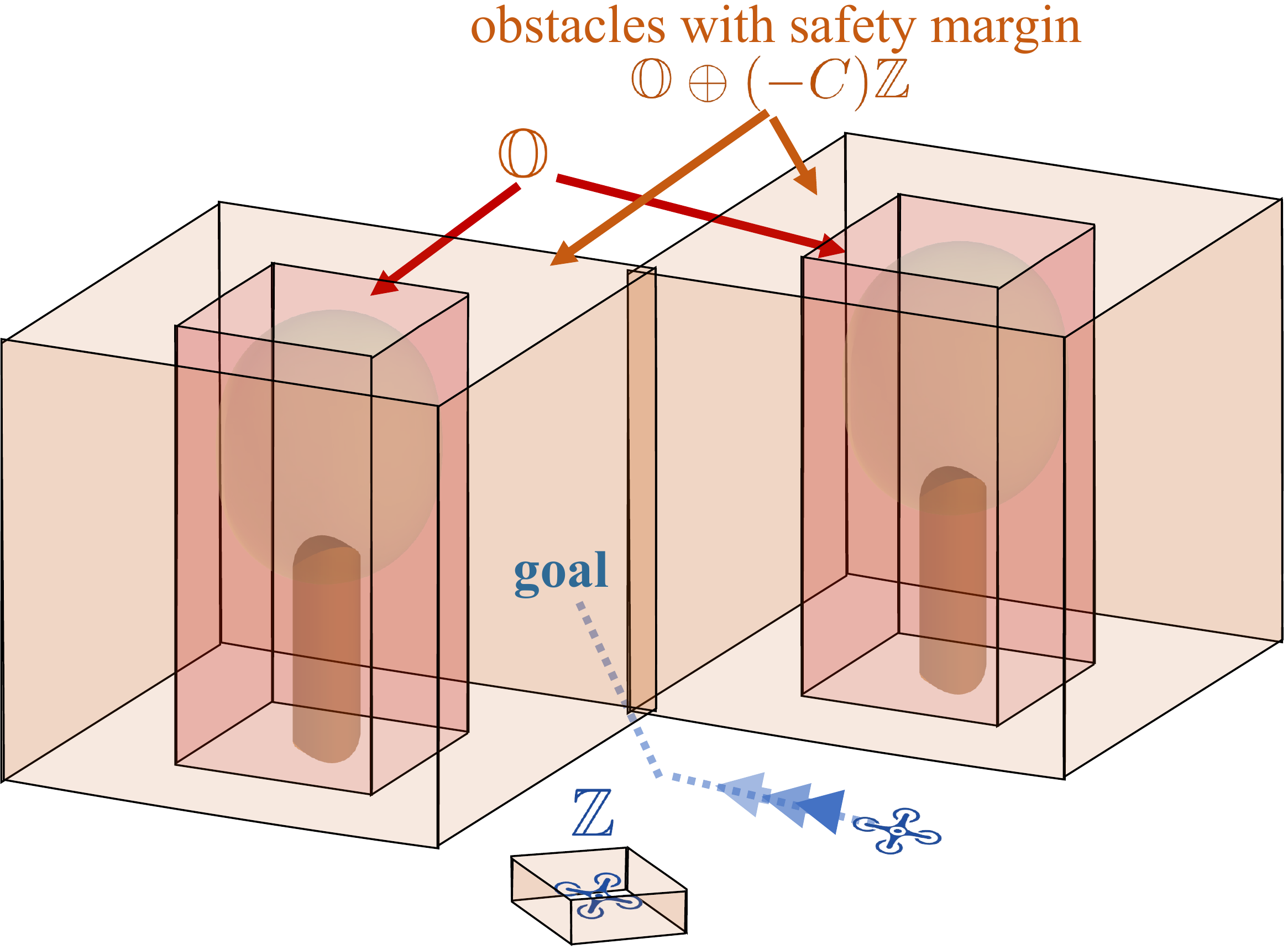}
		\captionsetup{font=normalsize,labelfont={bf,sf}}
		\caption{Safe planning by constraint tightening/constraint back off considering a single, fixed tracking control mode.}
		\label{fig:nosol}
	\end{subfigure}
		\caption{(a) Hierarchical separation of the planning and control layers; (b) Safe planning with a single tracking control mode.} \vspace*{0.0cm}
		\label{fig:blkdiag}
	\end{center}	
\end{figure}
Counteracting significant uncertainties, e.g., due to changing environmental conditions like wind, or operating in a dynamically evolving environment, requires frequent replanning and tightly integrating control and planning. This has attracted significant research over the past years: A multi-rate hierarchical approach consisting of three control layers operating at different sampling times is presented in~\cite{rosolia2020unified}. A Markov decision problem is exploited on the highest planning layer, which feeds into the planning/reference generation layer, exploiting a model predictive control formulation. At the lowest layer, a tracking controller is considered that uses a barrier function approach. 
In~\cite{pant2021co} an approach to co-design the planner and control is proposed, which allows considering vehicle dynamics and kinematic constraints. The tracking error of the controller is directly considered in the planner. A  multi-layer control framework was presented in~\citep{yin2020optimization}, where the optimization-based reference planner used a low fidelity model. At the same time, a feedback controller tracks the planned trajectory with a specific error bound. The state and input constraints are taken into account in the planning layer. The tracking control law and its error bounds are parameterized and computed offline using sum-of-squares programming. The approach allows maintaining safety by taking the tracking error bound into account to achieve maximum permissiveness of the planning layer. A two-layer  model predictive control framework is presented in~\citep{koeln2018two}, where the upper-layer controller determines the state trajectories over the entire mission while guaranteeing the satisfaction of the system constraint during operation. The lower-layer controller modifies the determined trajectories to improve reference tracking. In~\cite{vermillion2014stable}, a stable hierarchical model predictive control (MPC) is introduced using an inner loop reference model and  contracting constraint sets for guaranteeing the overall stability. The inner-loop controller tracks the output of a prescribed reference model.
\begin{figure}[htb]
	\begin{center}
    \begin{subfigure}[t]{.49\textwidth}
		\centering
		\includegraphics[width=\textwidth]{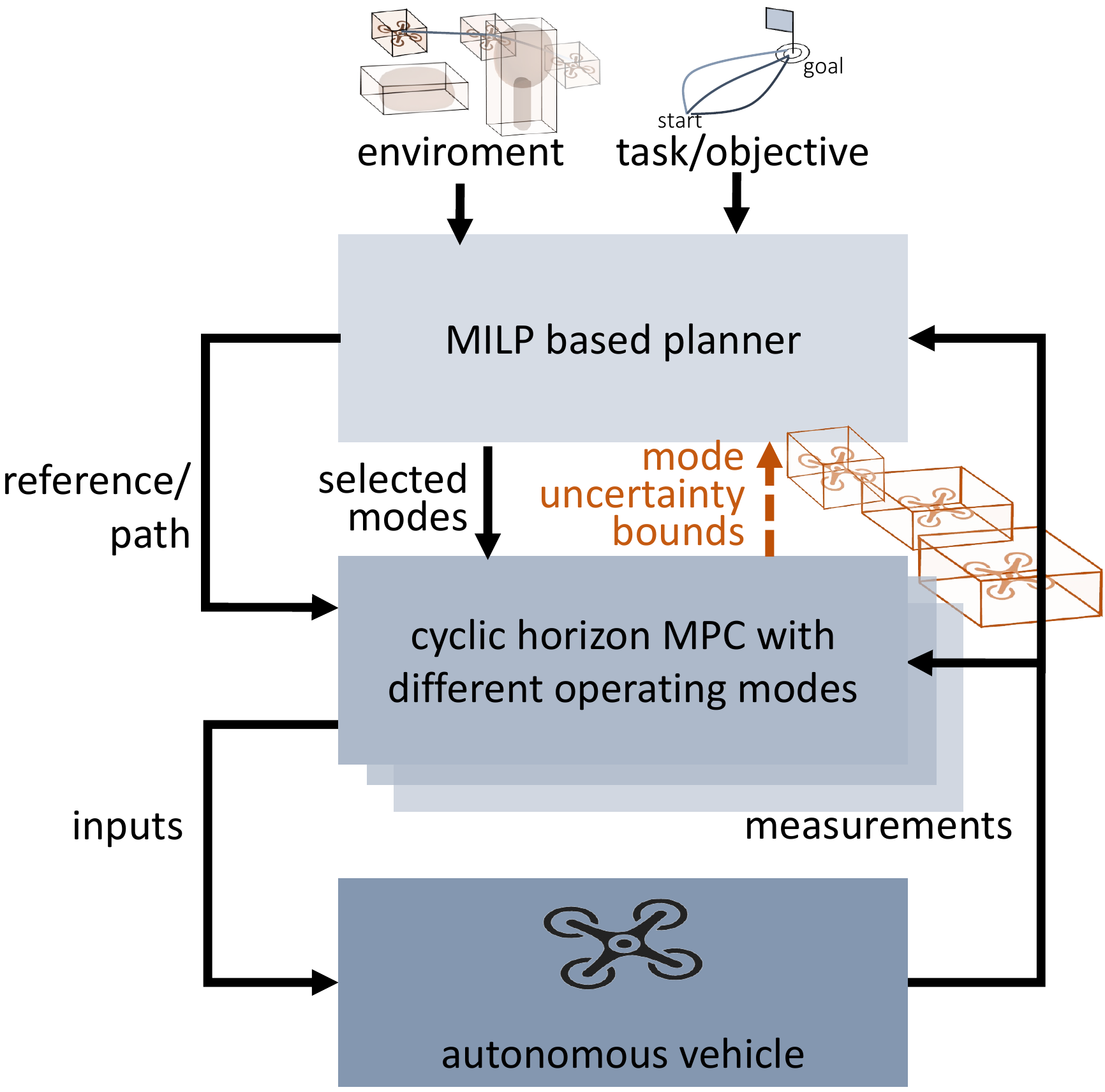}
		\captionsetup{font=normalsize,labelfont={bf,sf}}
		\caption{Proposed moving horizon planning and control exploiting multiple controller operation regions/modes.}
		\label{fig:MPCplanningcontrol}
	\end{subfigure}%
    \hfill
	\begin{subfigure}[t]{.49\textwidth}
		\centering
		\includegraphics[width=\textwidth]{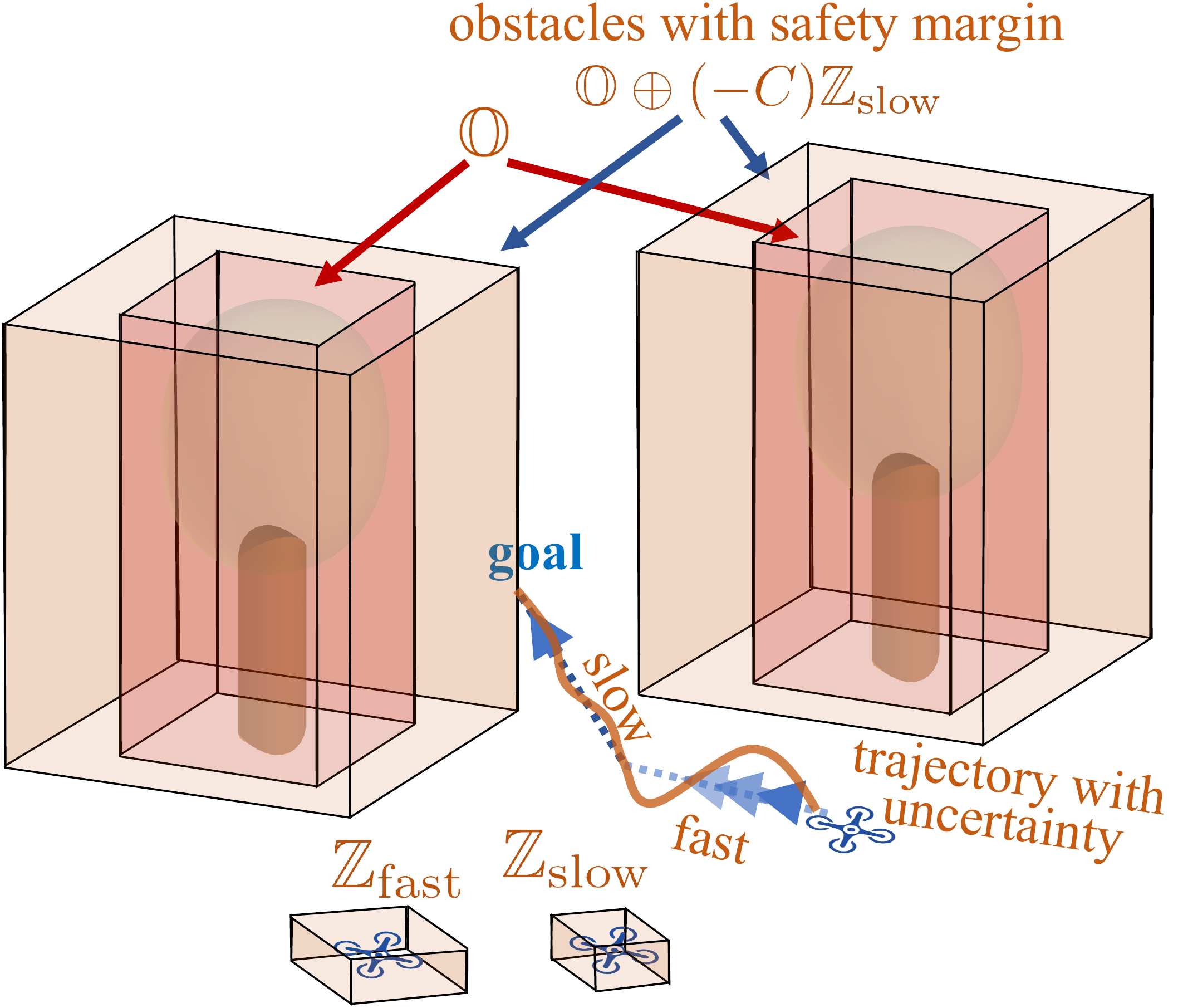}
		\captionsetup{font=normalsize,labelfont={bf,sf}}
		\caption{Reduced conservatism due to different tracking controller operation regions/modes.}
		\label{fig:pf2new}
	\end{subfigure}
		\caption{Proposed planning and control scheme with multiple tracking controller operation regions/modes.} \vspace*{0.0cm}
		\label{fig:multiple_modes}
	\end{center}	
\end{figure}

We propose a tightly integrated hierarchical planning and control approach to reduce conservatism while being computational feasible. Both planner and controller are based on the repeated solution of moving horizon optimal control problems. In particular, both layers are based on  robust MPC~\cite{grune2017nonlinear,findeisen2002introduction,rawlings2017book,chisci2001systems,mayne2005robust} formulations. The planning- and the lower-layer controller agree on  ``contracts''  (safety corridors), guaranteeing consistency and compatibility between the layers. Consequently, this ensures robust satisfaction of the constraints such as collision avoidance even in the presence of disturbances. The planner can choose different operating modes, corresponding to different accuracy's that the lower-layer controller can achieve in the closed-loop. In comparison, existing hierarchical formulations often rely on a single yet conservative tracking controller mode, c.f., Fig~\ref{fig:blkdiag}. 

A specific example of such modes is a slow-speed movement with high precision and a fast-speed movement with large uncertainty bounds, c.f.,~\ref{fig:pf2new}. As illustrated in Fig.~\ref{fig:MPCplanningcontrol} (b), the two control modes enable the planner to provide a collision-free path by switching online between different velocity ranges and corresponding controller parametrization. 

We propose a moving horizon formulation for the planning layer, resulting in an efficiently solvable mixed-integer linear programming (MILP). The tracking control for the vehicle is achieved by a cyclic horizon MPC~\cite{chisci2001systems,kogel2013stability}, which operates with a high sampling frequency. The tracking controller provides safety bounds for each operation mode, despite disturbances. We derive conditions for recursive feasibility to ensure constraint satisfaction and obstacle avoidance. Simulation results illustrate the efficiency and applicability of the proposed hierarchical strategy. Especially, it is shown how different modes reduce the conservatism, while the decomposition of the control problem limits the computational cost and enables real-time implementation while providing guarantees. 

The presented results are based on the results presented in \cite{ibrahim2020contract}. They are expanded with respect to less conservative conditions and formulations and the overall approach is evaluated  considering a quadcopter operating in a 3D environment.

The main contributions of this work are threefold:
\begin{enumerate}
    \item First, we outline a planning approach operating on a moving horizon, which allows us to consider different tracking controller modes. We show how this approach can be reformulated as a MILP, which can be efficiently solved in real-time. 
    \item Second, we design a low-layer cyclic horizon tube MPC controller, which provides safety bounds, i.e., tubes, for the different modes, e.g., velocities. This controller guarantees in combination with the planning layer constraint satisfaction and collision avoidance.
    \item Finally, we demonstrate our approach, considering a quadcopter operating in a 3D environment.
\end{enumerate}

The remainder of the paper is organized as follows. Section~\ref{sec.promlem} presents the problem setup. Section~\ref{sec.plan_control} outlines the hierarchy planning and control scheme, where Section~\ref{sec.plan} presents the MILP moving horizon planning problem.
Section~\ref{sec:PF} describes the low-layer cyclic horizon tube MPC and Section~\ref{Scheduler} outlines the switching between the operating modes.
Section~\ref{sec:ex_res}  presents the simulation example and results for an quadcopter, while Section~\ref{sec.Conclusion} recapitulates our findings, conclusions, and outlooks. \\
 
\noindent\emph{Notation} For two given sets $\mathbb{X},\mathbb{Y} \subset \mathbb{R}^n$, the Minkowski set sum $\oplus$ and the Minkowski set difference $\ominus$ are defined by:
$\mathbb{X}  \oplus  \mathbb{Y} \triangleq \{x+y|x \in \mathbb{X},y \in \mathbb{Y}\}$, 
$\mathbb{X}  \ominus \mathbb{Y} \triangleq \{x|x\oplus\mathbb{Y}\subseteq \mathbb{X}$. We use $\operatorname{rem}$ to denote the remainder function of Euclidean division, $\mathbf{1}$ denotes a column vector with $1$ in each entry, and 
$\|x\|_Q^2=x^TQx$. 

\section{Problem Formulation}\label{sec.promlem}
We consider the control of an autonomous vehicle, which should move (drive, fly, etc.) from a starting   point, $x(0)$, to  a goal, $x_\text{goal}$, while avoiding obstacles and satisfying constraints despite uncertainties and disturbances, see Fig.~\ref{fig:main_kay}.
We assume that the vehicle dynamics are subject to unknown but bounded additive disturbances and that they are governed by
\begin{subequations}
	\label{system}	\begin{align}
	x(k+1)=\, & A x(k) + Bu(k)+w(k),\label{eq:x}  \\
	y(k)=\, & C x(k), \\
	x(k)\in \,& \mathbb{X},\,\,u(k)\in \mathbb{U}.\label{state_input_constraints}
	\end{align}\end{subequations}
Here $x(k)\in \mathbb{R}^{n}$ is the vehicle's state,	$u(k)\in \mathbb{R}^{m}$ is the  applied control input, and the output $y(k)\in \mathbb{R}^{p}$, 
while $w(k)$ is an unknown, but bounded disturbance. The  state $x(k)$  and  the input $u(k)$ need to satisfy constraints: they are restricted to the sets $\mathbb{X}   $ and $\mathbb{U} $, which  are both closed and convex. \\

\noindent\emph{Obstacle avoidance:}
Beside the constraints~\eqref{state_input_constraints} on the state $x(k)$ and input $u(k)$ of the autonomous vehicle, we want to achieve obstacle avoidance, which we formulate in terms of the output $y(k)$ (e.g. the position). 
We assume that there are $H$ obstacles  and that each obstacle is modeled as a bounded set of the form $\mathbb{O}_\ell=\lbrace y\vert  E_\ell y< f_\ell \rbrace$, where $E_\ell\in\mathbb{R}^{q_\ell\times p}$. So, $\mathbb{O}_\ell$ is  the interior of a  convex, compact polytope. In the most simple case (box obstacles) $E_\ell=[I\,\,-I]$. 
Consequently, to avoid  that the vehicle  collides or touches the   obstacles we require that:
\begin{align}\label{eq:all_obstacles}
y (k)\notin \mathbb{O},\,\,
\mathbb{O}=\lbrace E_\ell y(k)< f_\ell,\,\,l=1,\ldots,H \rbrace, 
\end{align}
where $\mathbb{O}$ is the collection of all $H$ obstacles. Clearly, the set of admissible output/position $y (k)\notin \mathbb{O}$, defined in \eqref{eq:all_obstacles}, is  non-convex, but contains its boundaries. One can also formulate it as   
\begin{align}\label{obstacle_constraints}
\forall \ell\in \lbrace 1,\ldots,H\rbrace ,\,\,\exists a\in  \lbrace 1,...,q_\ell\rbrace:\ E_{\ell,a} y(k)\geq f_{\ell,a}.
\end{align} 
This formulation allows to use a MILP framework with the so-called big-M approach, which enables handling the obstacles systematically, see Section~\ref{Scheduler}. \\

\noindent\emph{Disturbance bounds and operating regions/modes:} 
The disturbance $w(k)$, and its bounds, may depend in parts also on the vehicle's state $x(k)$ and/or the applied control input $u(k)$. This might, for example be due to model uncertainties, which are captured as disturbances. For example, if the vehicle is operated at a lower speed, then the (worst case) disturbance might be smaller then at high speed. 
 
Specifically, we consider discrete sets of disturbance bounds for different low-layer/tracking controllers operating modes, i.e., we assume that there are  $N_w$ low-layer controller operating modes:
\begin{assumption}(Operating region/mode dependent disturbance bounds) \label{as:2}~
There are $N_w$ low-layer controller operating regions/modes defined by the state and input sets $U_i$ and $X_i$ such that if  $x(k)\in \mathbb{X}_i\subseteq \mathbb{X}$ and $u(k)\in \mathbb{U}_i\subseteq \mathbb{U}$, the uncertainty can be bounded by  $w(k) \in \mathbb{W}_i$, where the sets $\mathbb{X}_i$,  $\mathbb{U}_i$  are convex and closed polytopes and $\mathbb{W}_i$ is a convex and compact polytope.
\end{assumption} 
Note that the sets $ \mathbb{X}_i$ (and $ \mathbb{U}_i$) may overlap, i.e. one can have that $\mathbb{X}_i\cap \mathbb{X}_j\neq \emptyset$ for $i\neq j$. 

\section{Robust Hierarchical Moving Horizon Planning and Control} \label{sec.plan_control}
We focus on a hierarchical two-layer planning and control decomposition as depicted in  Fig. \ref{fig:MPCplanningcontrol}, to achieve  safe and collision-free motion of the autonomous vehicle to the goal. Both layers utilize (robust) MPC formulations for constraint satisfaction under uncertainties.

To provide guarantees despite the hierarchical decomposition of the problem in a planning and control layer, we utilize the concept of  ``contracts'', inspired by \cite{bathge2018contract,lucia2015contract,blasi2018distributed}. Loosely speaking, a contract specifies the achievable precision in terms of a bound on the disturbance the lower-layer tracking controller can achieve for a specific operating region/mode $i$ - a particular set of states and inputs. In the proposed approach, the moving horizon planner provides a reference path and selects the operating region the controller should operate in. To calculate a safe passage - satisfy the constraints \eqref{state_input_constraints}  and to avoid all obstacles \eqref{obstacle_constraints} - the planner takes the uncertainty bounds corresponding to the different operating regions directly into account. 

In detail the upper-layer planner calculates the reference based on a model with simplified dynamics of the form
\begin{align}
x_p(k_p+1) = A_px_p(k_p) + B_pu_p(k_p),\label{eq:xp} 
\end{align}
where $k_p$ is the planning time index.
The generated reference path and measurements of the vehicle's state  are used  by the  lower-layer controller to calculate the control input $u(k)$, see Fig. \ref{fig:blkdiag}. 
 The  lower-layer control loop aims to efficiently counteract the disturbances, to ensure that the vehicle follows the planned reference with   a specified accuracy,  to  satisfy the contract,  and 
to guarantee constraint satisfaction. As the obstacles are handled by the planner, the tracking controller does not need to consider them, which enables a fast and efficient implementation as non-convex constraints are avoided. 
  
We consider that the planning is done slowly, all $M\ (>1)$ time steps. So we have  
\begin{align}
k=k_p\cdot M,\quad
A_p=A^M,\quad B_p=\sum_{i=0}^{M-1}A^iB.\label{eq:apa}
\end{align} 
We assume that the the real dynamics \eqref{eq:x} and the "planning dynamics" \eqref{eq:xp} satisfy:
\begin{assumption}(Controllability of the planning and control dynamics) \label{as:1}
	The pairs $(A, B)$ and $(A_p, B_p)$ are controllable.
\end{assumption}
Note that if $(A_p,B_p)$ is controllable, then also $(A,B)$. In principle, $(A_p,B_p)$ could be uncontrollable even if $(A,B)$ is controllable, see \cite{Ogata1995}. 

The contracts ensure the planner that the lower-layer controller can bound the tracking error by 
\begin{align}
x\big((k_p+1)\cdot M\big)-x_p(k_p+1)\in\mathbb{Z}_i,
\label{eq:contract}
\end{align}
for $i=1,\ldots,N_w$, 
where the sets $\mathbb{Z}_i$ are convex, compact polytopes 
 and $N_w$ is the number of the operating mode. 

The lower-layer controller can guarantee the constraints, if  $x\in\mathbb{X}_i$ and $u\in\mathbb{U}_i$, i.e., that the state and the input are  inside the operation region $i$, see Assumption \ref{as:2}. Additionally, constraints due to the obstacle avoidance requirements and the different sampling times need to be satisfied, which are introduced in s second step.
We assume that the contracts - the operating regions/modes - are designed offline and known by both control layers. They depend on the  design of 
 of the low-layer controller and the operation mode $i$, i.e., the (partly) selectable uncertainty bound $\mathbb{W}_i$ on the disturbance $w$.
Summarizing: Utilizing the idea of contracts - different operating modes - enables  the planner to utilize and take the capability of the low-layer control loop into account for the computation of the reference. 
Consequently,  the planner  calculates and sends  to the low-layer controller a reference  and selects via 
the choice of $\mathbb{Z}_i$ the  maximum allowed tracking discrepancy. 
In other words, the reference planner can  switch between different operation modes, in order to 
improve the performance, as illustrated in Fig. \ref{fig:pf2new}.

\subsection{Upper-layer: Moving Horizon Reference Planning}\label{sec.plan}
 The reference planer should guarantee constraint satisfaction, including obstacle avoidance, despite  the presence of disturbances and uncertainties in combination with a suitably designed lower-layer control loop, compare Fig.~\ref{fig:MPCplanningcontrol}. 
 
 The key idea is to  incorporate the concept of contracts - different operating modes - into mathematical programming based moving horizon planning schemes\cite{trodden2008multi,ibrahim2019hierarchical,ibrahim2020learning,pinto2017risk,richards2006robust}. 
To do so, the reference planning problem $\mathcal{P}\big(x(k_pM)\big)$ is formulated on a moving horizon as a optimization problem:
\begin{subequations}
	\begin{align}
	& & \min_{\lbrace x_p \rbrace , \lbrace u_p \rbrace ,i}J_p\big(\lbrace x_p \rbrace , \lbrace u_p \rbrace \big)& , \label{eq:pp_jj}\\	
	 \text{s.t. }\nonumber \\
	&&i&\in\lbrace 1,\ldots,N_w \rbrace\\
	& & x( k_pM)-x_p(k_p|k_p) &\in \mathbb{Z}_i , \label{eq:pp_XO} \\
	\forall j &\in \{0,\ldots,N-1\}: & x_p(k_p+j+1|k_p) &= A_p x_p(k_p+j|k_p)+ B_p u_p(k_p+j|k_p), \label{eq:pp_x} \\
\forall j &\in \{0,\ldots,N-1\}: & x_p(k_p+j|k_p) &\in \mathbb{X}_i\ominus  \mathbb{Z}_i, \label{eq:pp_X} \\
\forall j &\in \{0,\ldots,N-1\}: & u_p(k_p+j|k_p) &\in \mathbb{U}_i\ominus  {K}\mathbb{Z}_i, \label{eq:pp_U} \\
\forall j &\in \{0,\ldots,N-1\}: & Cx_p(k_p+j|k_p) &\notin \mathbb{O}\oplus  (-C) \mathbb{Z}_i  \label{eq:pp_Y},\\
\forall j &\in \{0,\ldots,N-1\}: &	\big(x_p(k_p+j|k_p),u_p(k_p+j|k_p)\big) &\in \mathbb{I}_i, \label{eq:isc} \\
	& & x_p(k_p+N|k_p) &\in \mathbb{X}_i^f   . \label{eq:pp_Xf} 
	\end{align}
	\label{eq:milp}
\end{subequations}
\hide{\begin{subequations}
	\begin{align*}
	\min_{\lbrace x_p \rbrace , \lbrace u_p \rbrace ,i}J_p\big(\lbrace x_p \rbrace , \lbrace u_p \rbrace \big) ,& \label{eq:pp_jj}\\	
	\text{s.t. }  &   \nonumber \noindent \\	
	i&\in{1,\ldots,N_w}\\
	\\
		x( k_pM)-x_p(k_p|k_p) &\in \mathbb{Z}_i , \label{eq:pp_XO} \\
  x_p(k_p+j+1|k_p) &= A_p x_p(k_p+j|k_p)+ B_p u_p(k_p+j|k_p),\,  j= 1, \ldots, N-1 , \label{eq:pp_x} \\
 x_p(k_p+j|k_p) &\in \mathbb{X}_i\ominus  \mathbb{Z}_i,\,  j= 1, \ldots, N-1 ,  \label{eq:pp_X} \\
 u_p(k_p+j|k_p) &\in \mathbb{U}_i\ominus  {K}\mathbb{Z}_i,\,  j= 1, \ldots, N-1 ,  \label{eq:pp_U} \\
 Cx_p(k_p+j|k_p) &\notin \mathbb{O}\oplus  (-C) \mathbb{Z}_i,\,  j= 1, \ldots, N-1 ,  \label{eq:pp_Y}\\
	\big(x_p(k_p+j|k_p),u_p(k_p+j|k_p)\big) &\in \mathbb{I}_i,\,  j= 1, \ldots, N-1 ,  \label{eq:isc} \\
	x_p(k_p+N|k_p) &\in \mathbb{X}_i^f   . \label{eq:pp_Xf} 
	\end{align*}
	\label{eq:milp}
\end{subequations}}
Here  $N$ denotes the planning  horizon, $i$  the operation mode, and  $(k_p+j|k_p)$ corresponds to the prediction of a value at time $k_p + j$ made at time $k_p$. 
The constraint~\eqref{eq:pp_XO} represent an initial constraint at the begin of the planning time $k_p$. 
 Eq.~\eqref{eq:pp_x} represent the vehicle dynamics used by the planning layer. 
While the constraints~(\ref{eq:pp_X},\ref{eq:pp_U}) are the tightened  state  and input constraints,     where $K$ is  a control gain, which is discussed later in detail.
For the 
output constraints \eqref{eq:pp_Y}  the constraint tightening corresponds to an  enlargement of the obstacles, see Fig.~\ref{fig:pf2new}. 

To allow for an efficient reformulation of \eqref{eq:milp} as an MILP, c.f. Section~\ref{Scheduler}, we consider the following cost function
\begin{align*}
J_{\text{p}}\big(\lbrace x_p \rbrace , \lbrace u_p \rbrace \big) =  \|x_{\text{\tiny goal}} - x_p(k_p+N)  \|_\infty+ \sum_{j=k_p}^{k_p+N-1}\alpha_x\| {x}_p(j)\|_\infty + \alpha_u\|{u}_p(j)\|_\infty.
\end{align*} 
The stage cost penalizing the state $x_p$ and control input $u_p$ with different weights $(\alpha_x \geq 0,\alpha_u \geq 0)$, respectively. The terminal cost penalizes the vehicle's distance at the end of the planning horizon to the goal $x_{\text{goal}}$.
 
The inter-sample constraints \eqref{eq:isc} and the terminal constraint \eqref{eq:pp_Xf}  depend on the operation mode $i$ and are non-convex.
We make the following assumption with respect to the inter-sample constraints \eqref{eq:isc}  
\begin{assumption}(Inter-sample constraints) \label{ass:isc}
	The inter-sample constraints \eqref{eq:isc} determined by the sets $ \mathbb{I}_i$  are such that, if $(x_p ,u_p)\in \mathbb{I}_i$, then  for $  \ell=1,\ldots,M-1$ it holds
	\begin{subequations}\label{def_isc}    \begin{align}
		A^\ell x_p+\sum\limits_{m=0}^{\ell-1}A^mBu_p  &\in \mathbb{X}_i\ominus\mathbb{Z}_i,\\
		C(A^\ell x_p+\sum\limits_{m=0}^{\ell-1}A^mBu_p)&  \notin \mathbb{O} \oplus (-C)\mathbb{Z}_i.
		\end{align}\end{subequations}
\end{assumption} 
This assumption guarantees that the lower-layer/tracking control loop, operating at the faster time scale, can always satisfy the constraints. A straightforward choice is to choose $ \mathbb{I}_i$ directly as \eqref{def_isc}, which can lead to a large number of constraints and thus might increase the computational effort.
 Note that depending on the actual dynamics \eqref{eq:xp} certain constraints      in \eqref{eq:isc}  might be redundant and thus also
 the overall optimization \eqref{eq:milp} and can be removed without changing the solution of the optimization problem, e.g. using physical insight into the system dynamics or with the procedure presented in~\cite{Kerrigan2001}.

For the terminal sets  $\mathbb{X}^f_i$ we  assume  that they are positive invariant sets for \eqref{eq:xp} satisfying all constraints:
\begin{assumption}(Terminal sets and terminal control laws)\label{ass:terminal}~
	The  terminal control laws $\kappa^f_i(x_p)$ and the terminal sets $ \mathbb{X}^f_i$ are such  that 	$x_p \in \mathbb{X}^f_i$ implies:
	\begin{subequations}\begin{align}
		A_px_p+B_p\kappa^f_i(x_p)&\in\mathbb{X}_i^f, \\
		x_p&\in\mathbb{X}_i\ominus\mathbb{Z}_i, \\
		\kappa^f_i(x_p)&\in\mathbb{U}_i\ominus K \mathbb{Z}_i,\\
		\big(x_p ,\kappa^f_i(x_p)\big)&\in \mathbb{I}_i, \label{terminal_inter}\\
		Cx_p  &\notin \mathbb{O} \oplus (- C)\mathbb{Z}_i.\label{terminal_obstacle}
		\end{align} \end{subequations}
\end{assumption}
Clearly,  the terminal sets  $ \mathbb{X}^f_i$ are   non-convex due to the presence of the inter-sample constraints~\eqref{terminal_inter} and the obstacle avoidance condition~\eqref{terminal_obstacle}. 
A possible  choice   are  admissible, nominal steady states  $x_p=A_px_p+B_p\kappa^f_i(x_p)$ for the terminal control sets and the corresponding inputs as terminal control laws. 
For autonomous vehicles, these are basically all points where the vehicle can stop its motion safely. These points can also be determined for systems with complex dynamics such as unmanned aerial vehicles. Note that the terminal control laws $\kappa^f_i(x_p)$  are fictitious and never implemented.

The upper-layer planning algorithm solves the optimization problem $\mathcal{P}\big(x(k_pM)\big)$
\eqref{eq:milp}. Based on the optimal solution it sends to the  lower-layer controller  the chosen operation mode $i^\star$ and an inter-sampled reference 
\begin{align}\label{input_lower}
x_{\text{ref}}(k_pM+j)= A^jx_p^\star(k_p\vert k_p)+\sum\limits_{m=0}^{j} A^{m-1}Bu_p^\star(k_p\vert k_p),\,\, j=0,\ldots,M.
\end{align} 
Clearly, $\big(x_p^\star(k_p\vert k_p),u_p^\star(k_p\vert k_p)\big)\in\mathbb{I}_i$ implies that the reference  satisfies
\begin{align} 
C x_{\text{ref}}(k_pM+j)\notin \mathbb{O} \oplus(- C)\mathbb{Z}_i ,\,\,  j=0,\ldots,M, \label{eq:xref_int_cond}
\end{align} 
which means that the planned reference $x_{\text{ref}}$  robustly avoids obstacles, it satisfies the  condition \eqref{eq:all_obstacles}.
Using the idea of contracts between both layers, we can guarantee the following.
\begin{proposition}\label{robustprop}
	(Recursive feasibility of the upper layer planning) 
	Let   Assumptions \ref{as:2}-\ref{ass:terminal} hold
	and assume that	the lower-layer controller guarantees the contract  \eqref{eq:contract}   for the reference \eqref{input_lower}.
	If the planning optimization problem $\mathcal{P}\big(x(k_pM)\big)$ \eqref{eq:milp} is   feasible, then   also $\mathcal{P}\big(x((k_p+1)M)\big)$ is feasible.
\end{proposition}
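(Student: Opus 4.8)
The plan is to use the classical recursive-feasibility argument for receding-horizon control --- shift the previously computed plan by one planning step and append the fictitious terminal control law --- adapted to the hierarchical setting, where the link between the planning problem at $k_p$ and the one at $k_p+1$ is supplied not by exact propagation of the plant state but by the contract \eqref{eq:contract} that the lower layer is assumed to honour for the reference \eqref{input_lower}.

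Concretely, I would start from a feasible (in particular, optimal) solution of $\mathcal{P}\big(x(k_pM)\big)$ consisting of a mode $i^\star$, predicted states $x_p^\star(k_p+j\vert k_p)$ for $j=0,\ldots,N$ and inputs $u_p^\star(k_p+j\vert k_p)$ for $j=0,\ldots,N-1$, and construct a candidate for $\mathcal{P}\big(x((k_p+1)M)\big)$ that keeps the \emph{same} mode $i^\star$, uses the shifted states $x_p(k_p+1+j\vert k_p+1)=x_p^\star(k_p+1+j\vert k_p)$ and inputs $u_p(k_p+1+j\vert k_p+1)=u_p^\star(k_p+1+j\vert k_p)$ on the overlapping indices, and completes the last step by $u_p(k_p+N\vert k_p+1)=\kappa^f_{i^\star}\big(x_p^\star(k_p+N\vert k_p)\big)$ together with the state it generates through \eqref{eq:xp}. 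The bulk of the work is then a constraint-by-constraint check of \eqref{eq:milp}: the mode constraint is trivial; the dynamics \eqref{eq:pp_x} hold on the overlap because they held before and on the appended step by construction; the tightened state, input and obstacle constraints \eqref{eq:pp_X}--\eqref{eq:pp_Y} and the inter-sample constraints \eqref{eq:isc} are inherited from the optimizer on the overlapping indices; and on the appended index $k_p+N$ they follow from $x_p^\star(k_p+N\vert k_p)\in\mathbb{X}^f_{i^\star}$ together with the matching items of Assumption \ref{ass:terminal} --- state containment, input containment, the inter-sample condition \eqref{terminal_inter}, and the obstacle condition \eqref{terminal_obstacle} --- while the terminal constraint \eqref{eq:pp_Xf} follows from the positive invariance $A_px_p+B_p\kappa^f_{i^\star}(x_p)\in\mathbb{X}^f_{i^\star}$.

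The one step that is not mere bookkeeping --- and where I expect the ``work'' of the proof to be --- is the initial constraint \eqref{eq:pp_XO} of the successor problem. With the candidate above it reads $x((k_p+1)M)-x_p^\star(k_p+1\vert k_p)\in\mathbb{Z}_{i^\star}$, which is precisely the contract \eqref{eq:contract} assumed to be enforced by the lower-layer controller for the reference \eqref{input_lower}. This is exactly why the construction must retain the mode $i^\star$ that the planner selected: the sets $\mathbb{Z}_i$, $\mathbb{X}_i$, $\mathbb{U}_i$, $\mathbb{I}_i$ and $\mathbb{X}^f_i$ are all mode-dependent, and the contract is guaranteed only for the chosen mode. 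Once \eqref{eq:pp_XO} is settled in this way the candidate is feasible, hence $\mathcal{P}\big(x((k_p+1)M)\big)$ is feasible, which is the assertion. A minor additional remark worth spelling out is that Assumption \ref{ass:isc} applied to $\big(x_p^\star(k_p\vert k_p),u_p^\star(k_p\vert k_p)\big)$ yields \eqref{eq:xref_int_cond}, so the safety interpretation of the contract is internally consistent; this, however, does not enter the feasibility induction itself.
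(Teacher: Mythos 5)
Your proposal is correct and follows essentially the same route as the paper's proof: shift the previous optimizer, keep the selected mode $i^\star$, append the fictitious terminal control law $\kappa^f_{i^\star}$, verify the shifted constraints from the previous feasibility and the appended step from Assumption~\ref{ass:terminal}, and use the contract \eqref{eq:contract} to establish the new initial constraint \eqref{eq:pp_XO}. Your explicit remark that the mode must be retained because all the tightening sets are mode-dependent is a point the paper leaves implicit.
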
          
\begin{proof} 
The key idea for the proof is that the upper-layer planning MPC (in combination with the contracts) corresponds to a tube-based MPC using robust invariant sets, compare \cite{grune2017nonlinear,rawlings2017book,mayne2005robust}.

We denote the optimal solution of $\mathcal{P}\big(x(k_pM)\big)$   by   $x_p^\star(k_p\vert k_p)$, \ldots,   $u_p^\star(k_p\vert k_p)$,   \ldots,   $i^\star$. 
	Let us consider the following  guess as solution for the optimization problem $\mathcal{P}\big(x((k_p+1)M)\big)$ 
	\begin{align*}
	i&=i^\star, \\
	\forall j\in\{1,\ldots,N\}:x_p(k_p+j \vert k_p+1)&=x^\star_p(k_p+j\vert k_p), \\
	x_p(k_p+N+1 \vert k_p+1)&=A_px^\star_p(k_p+N\vert k_p)+B_p\kappa^f_i\big(x^\star_p(k_p+N\vert k_p)\big), \\
	\forall m\in\{1,\ldots,N-1\}:u_p(k_p+m \vert k_p+1)&=u^\star_p(k_p+m\vert k_p),\\
	u_p(k_p+N \vert k_p+1)&=\kappa^f_i\big(x^\star_p(k_p+N\vert k_p)\big).
	\end{align*}  
	Note that this guess is based on the previous solution, 
	the selected operation mode $i^\star$ and the terminal control law $\kappa_i^f$.
		We need to verify that this guess is feasible (but it might be possibly suboptimal) for the optimization problem $\mathcal{P}\big(x((k_p+1)M)\big)$.
	Using the contract \eqref{eq:contract}, i.e. the guarantee on the lower-layer control loop,   we have that $x\big((k_p+1)M\big)-x_p(k_p+1\vert k_p+1)\in\mathbb{Z}_i$, i.e. 	\eqref{eq:pp_XO} holds for $\mathcal{P}\big(x((k_p+1)M)\big)$.  
	The above initial guess 
	satisfies the constraints  \eqref{eq:pp_x}, \eqref{eq:pp_X}-\eqref{eq:isc}      for $1\leq j<N-1$
	for the 	optimization problem $\mathcal{P}\big(x(k_pM)\big)$, thus also the similar constraints for     $\mathcal{P}\big(x((k_p+1)M)\big)$.
	Finally, using the conditions on the terminal sets and terminal control laws in Assumption~\ref{ass:terminal} imply that  also the remaining constraints of $\mathcal{P}\big(x((k_p+1)M)\big)$, i.e.
	  \eqref{eq:pp_x}-\eqref{eq:isc}    with $j=N-1$ 	and \eqref{eq:pp_Xf} are satisfied.
\end{proof} 

\begin{remark}(Planning without feedback)
\emph{
In  \eqref{eq:milp} the initial constraint \eqref{eq:pp_XO}  provides  a feedback between the planning state $x_p$ and the real state $x$ for all $k_p$. One can only enforce the  constraint \eqref{eq:pp_XO}    at the initial time ($k_p=0$)  and replace it by the simpler 
 equality constraint $x^\star_p(k_p+1\vert k_p)=x_p(k_p+1\vert k_p+1)$ for $k_p>0$. 
		This  would remove the feedback from the plant to the upper-layer planner. This has the advantage  that it  avoids the need to wait for plant feedback for the planning and thus could enable computationally less restrictive planning, but it would also decrease the control   performance.} \label{rem:independent}
\end{remark}  

The optimization problem $\mathcal{P}$~\eqref{eq:milp} is non-convex, but it can be reformulated as an MILP, which can be efficiently solved, see  Section \ref{Scheduler}.


\subsection{Lower-layer: Cyclic Horizon Robust Model Predictive Tracking Control}\label{sec:PF}
The lower-layer controller tracks the generated reference based on the (faster) dynamics of the real system and needs to guarantee the contracts ~\eqref{eq:contract}. Also at the lower-layer we use the concept  of robust, tube-based MPC~\cite{grune2017nonlinear,rawlings2017book,chisci2001systems,mayne2005robust}, but we rely on growing tubes~\cite{chisci2001systems} instead of tubes based on robust positive invariance~\cite{mayne2005robust} as in the upper-layer.
The proposed tube-based MPC of the lower-layer is based on  a nominal prediction dynamics (nominal state $z$, nominal input $v$), which starts from the   real state $x(k)$ at the current time $k$:
\begin{subequations}\label{LL_c_1}
	\begin{align}
	z(k+j+1\vert k)&=Az(k+j\vert k)+Bv(k+j\vert k),\\
	z(k\vert k)&=x(k).
	\end{align} \end{subequations}
The  effect of   disturbances $w(k+j)$ onto the closed loop    is taken into account using a fictitious, auxiliary control law  of the form
\begin{align}\label{aux_control_law}
u(k+j\vert k)=v(k+j\vert k)+K\big(x(k+j)-z(k+j\vert k)\big). 
\end{align}
The control gain $K$ in this affine feedback is chosen such that $A+BK$ is Schur stable.
The  auxiliary control law is utilized  to determine sets to bound  the difference $e(k+j\vert k)=x(k+j)-z(k+j\vert k)$
between the real system state $x$ and the predictions $z$ made using~\eqref{LL_c_1}.  In detail, for the $i$-th operation mode the error bounds satisfy  $e(k+j\vert k)\in\mathbb{E}_i
(j)$ where 
\begin{align}
\mathbb{E}_i(j+1)=(A+BK)  \mathbb{E}_i(j)\oplus \mathbb{W}_i, &&
\mathbb{E}_i(0)=\lbrace 0 \rbrace. \label{eq:defE}
\end{align}
Note that the size of the sets $\mathbb{E}_i$ monotonically increases with $j$, i.e. $\mathbb{E}_i(j)\subseteq \mathbb{E}_{i}(j+1)$. However,      for any $j\geq 0$ we have that
$\mathbb{E}_i(j) \subseteq \mathbb{Z}_i$, where $\mathbb{Z}_i$ is the (minimum) robust positive invariant set, compare \cite{rawlings2017book}, satisfying:
\begin{align}
\mathbb{Z}_i\supseteq (A+BK) \mathbb{Z}_i \oplus \mathbb{W}_i .\label{eq:defZi}
\end{align} 
In the lower-layer MPC we predict until the next planning instant utilizing a cyclic horizon  $L_k=M-\operatorname{rem}(k,M)$, see \cite{kogel2013stability}. For the case that $k$ is a multiple of $M$, we have $L_k=M$. Otherwise, $L_k$ is smaller than $M$, but $k+L_k$ is a multiple of $M$. Consequently,  the horizon shrinks between two planning instants and is increased at the next planning instant again to length $M$.


The lower-layer MPC predicts and optimizes    nominal  state  and input sequences
\begin{align}
\mathbf{z}(k)=\lbrace z(k \vert k),\ldots, z(k+L_k \vert k)\rbrace,\quad\mathbf{v}( k)=\lbrace v(k \vert k),\ldots, v(k+L_k-1 \vert k)\rbrace, \end{align}
based on the nominal dynamics~\eqref{LL_c_1} and subject to satisfaction of the constraints
\begin{subequations}\label{LL_c_2}
	\begin{align}
	z(k+j\vert k) &\in \mathbb{X}_i \ominus \mathbb{E}_i(j ), \\  
	v(k+j\vert k) &\in \mathbb{U}_i \ominus K\mathbb{E}_i(j ), \\
C\left( z(k+j\vert k)-x_{\text{ref}}(k+j)\right)     &\in    C\left( \mathbb{Z}_i \ominus \mathbb{E}_i(j)\right)  ,\label{LL_c_2_c} \\
	z(k+L_k\vert k)-x_{\text{ref}}(k+L_k) &\in    \mathbb{Z}_i \ominus \mathbb{E}_i(L_k). \label{LL_c_2_d}
	\end{align}
\end{subequations}
Note that these constraints include   the convex state and input constraints~\eqref{state_input_constraints}.
In contrast, the non-convex obstacle avoidance constraints are taken into account  using the concept of contracts. Basically, the lower-layer controller needs to  enforce the guaranteed accuracy with respect to  the output (condition~\eqref{LL_c_2_c}) or even the full state at  the end of the prediction (condition~\eqref{LL_c_2_d}). Note that, the constraints~\eqref{LL_c_1},~\eqref{LL_c_2} are convex. 

The lower-layer MPC 
penalizes the deviation error from the reference $x_{\text{ref}}$ and 
utilizes the convex cost function 
\begin{align}
J_\text{t}\big( \mathbf{z}( k),\mathbf{v}( k) \big)  =& \sum_{j=k}^{k+L_k-1}   \|x_{\text{ref}}(j) -  z(j\vert k)  \|_Q^2  +\| v(j\vert k) \|_R^2 +\|x_{\text{ref}}(k+L_k) - z(k+L_k\vert k)  \|_P^2 ,
\end{align}
where  the matrices  
$Q \in \mathbb{R}^{n\times n}$, $P \in \mathbb{R}^{n\times n}$ and $R \in \mathbb{R}^{m\times m}$
are positive definite and represent the  weighting for the inter-sample states, the final state and the inputs, respectively. 

The applied control  input $u(k)=v^\star(k\vert k)$  is given by solving  the  optimization problem  $\mathcal{L}\big(x(k),\lbrace x_{\text{ref}}\rbrace,i,k\big)$
\begin{align}\min\limits_{\mathbf{z}( k),\mathbf{v}( k)}J_\text{t}\big(
\mathbf{z}( k),\mathbf{v}( k)  \big) \mbox{ s.t.~\eqref{LL_c_1},~\eqref{LL_c_2}.} \label{OCP_LL}
\end{align}
This  optimization problem depends on the current state available at the lower-layer as well as   the  reference and the operation mode determined by the upper-layer planner. The resulting optimization problem is a convex quadratic program (QP) and has, in addition, a special structure, which allows its efficient solution, even on computationally limited hardware, see e.g. \cite{wang2009fast,zometa2013muao}.

For the lower-layer, we can derive  the following   properties assuming that the upper-layer reference is chosen suitably.
\begin{proposition}(Constraint satisfaction and obstacle avoidance) 
	Let Assumptions \ref{as:2} and \ref{ass:isc} hold and assume that the reference $x_{\text{ref}}$ is given by~\eqref{input_lower} and satisfies~\eqref{eq:xref_int_cond}.
	If the lower-layer MPC problem
	$\mathcal{L}\big(x(k),\lbrace x_{\text{ref}}\rbrace,i,k\big)$~\eqref{OCP_LL} is feasible, then the constraints~\eqref{state_input_constraints} are satisfied: $x(k)\in\mathbb{X}_i\subseteq \mathbb{X}$ and $u(k)\in\mathbb{U}_i\subseteq \mathbb{U}$ and the obstacles are avoided, i.e.~\eqref{obstacle_constraints} holds.
\end{proposition}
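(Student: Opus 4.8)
The plan is to read off all of the lower-layer constraints at the first prediction step $j=0$. There the nominal trajectory is pinned to the true state, $z(k\vert k)=x(k)$, and the error tube degenerates to $\mathbb{E}_i(0)=\lbrace 0\rbrace$ by \eqref{eq:defE}; using the elementary identities $\mathbb{S}\ominus\lbrace 0\rbrace=\mathbb{S}$ for any set $\mathbb{S}$ and $K\lbrace 0\rbrace=C\lbrace 0\rbrace=\lbrace 0\rbrace$, the tightened constraints \eqref{LL_c_2} at $j=0$ collapse into statements directly about the quantities realised at time $k$, namely $x(k)$ and the applied input $u(k)=v^\star(k\vert k)$. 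So the whole proof is a matter of unwinding these $j=0$ constraints and invoking the standing assumptions.

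First I would use feasibility of $\mathcal{L}\big(x(k),\lbrace x_{\text{ref}}\rbrace,i,k\big)$ to obtain an optimiser $\big(\mathbf{z}(k),\mathbf{v}(k)\big)$ satisfying \eqref{LL_c_1}, \eqref{LL_c_2}. The $j=0$ state constraint is $z(k\vert k)\in\mathbb{X}_i\ominus\mathbb{E}_i(0)=\mathbb{X}_i$, which with $z(k\vert k)=x(k)$ gives $x(k)\in\mathbb{X}_i$, and $\mathbb{X}_i\subseteq\mathbb{X}$ by Assumption~\ref{as:2}. The $j=0$ input constraint is $v(k\vert k)\in\mathbb{U}_i\ominus K\mathbb{E}_i(0)=\mathbb{U}_i$; since $u(k)=v^\star(k\vert k)$ --- which is also what the auxiliary law \eqref{aux_control_law} yields at $j=0$, because $x(k)-z(k\vert k)=0$ --- this gives $u(k)\in\mathbb{U}_i\subseteq\mathbb{U}$, hence \eqref{state_input_constraints}.

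For obstacle avoidance I would take the $j=0$ instance of \eqref{LL_c_2_c}, namely $C\big(z(k\vert k)-x_{\text{ref}}(k)\big)\in C\big(\mathbb{Z}_i\ominus\mathbb{E}_i(0)\big)=C\mathbb{Z}_i$, so that $y(k)=Cx(k)\in Cx_{\text{ref}}(k)\oplus C\mathbb{Z}_i$. Against this I would use the hypothesis that $x_{\text{ref}}$ satisfies \eqref{eq:xref_int_cond} --- the robust obstacle-avoidance property of the planned reference \eqref{input_lower}, which is precisely what Assumption~\ref{ass:isc} is designed to ensure --- giving $Cx_{\text{ref}}(k)\notin\mathbb{O}\oplus(-C)\mathbb{Z}_i$. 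The one mildly technical step is the set-theoretic translation: for a point $a$, $a\notin\mathbb{O}\oplus(-C)\mathbb{Z}_i$ says exactly that $a+C\zeta\notin\mathbb{O}$ for every $\zeta\in\mathbb{Z}_i$, i.e. $\big(a\oplus C\mathbb{Z}_i\big)\cap\mathbb{O}=\emptyset$; applying this with $a=Cx_{\text{ref}}(k)$ together with the membership above forces $y(k)\notin\mathbb{O}$, which is \eqref{obstacle_constraints}.

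I do not expect a genuine obstacle here --- the content is bookkeeping. The only point requiring care is keeping the (possibly non-injective) output matrix $C$ on the correct side of every inclusion: one never needs $z(k\vert k)-x_{\text{ref}}(k)\in\mathbb{Z}_i$ itself, only its image $C\big(z(k\vert k)-x_{\text{ref}}(k)\big)\in C\mathbb{Z}_i$ as \eqref{LL_c_2_c} is stated, and the reference property must be unwound as $\mathbb{O}\oplus(-C)\mathbb{Z}_i$ rather than $\mathbb{O}\oplus C\mathbb{Z}_i$. It is worth remarking that neither the terminal constraint \eqref{LL_c_2_d} nor the containment $\mathbb{E}_i(j)\subseteq\mathbb{Z}_i$ enters this one-step claim; those are needed only for recursive feasibility and for honouring the contract \eqref{eq:contract}.
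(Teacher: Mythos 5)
Your proposal is correct and follows essentially the same route as the paper's proof: evaluate the $j=0$ constraints using $z(k\vert k)=x(k)$ and $\mathbb{E}_i(0)=\lbrace 0\rbrace$ to get state/input feasibility, then combine the $j=0$ instance of \eqref{LL_c_2_c} with \eqref{eq:xref_int_cond} for obstacle avoidance. Your explicit unwinding of the Minkowski-sum step $a\notin\mathbb{O}\oplus(-C)\mathbb{Z}_i \Leftrightarrow a\oplus C\mathbb{Z}_i$ disjoint from $\mathbb{O}$ is a welcome clarification of a step the paper leaves implicit.
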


\begin{proof}
Combining ~\eqref{LL_c_1},~\eqref{aux_control_law} and~\eqref{LL_c_2} we have that 
	$z^\star(k\vert k)=x(k)$,  $v^\star(k\vert k)=u(k)$ and  $Cz^\star(k\vert k)=y(k)$. Together with $\mathbb{E}_i(0)=\lbrace 0 \rbrace$ and Assumption \ref{as:2} this implies that the state and input constraints~\eqref{state_input_constraints} are satisfied. 
	\eqref{LL_c_2_c} together with 	\eqref{eq:defE} yield that  $y(k)-Cx_{\text{ref}}(k) \in C\mathbb{Z}_i$. Combined with~\eqref{eq:pp_Y}
\eqref{eq:xref_int_cond} this implies that  the avoidance constraint $y_k\notin  \mathbb{O}$ ~\eqref{obstacle_constraints} holds. 
\end{proof} 
Note that if the upper-layer planner provides the reference  $x_{\text{ref}}$ ~\eqref{input_lower}, then also \eqref{eq:xref_int_cond} holds.

\begin{proposition}(Recursive feasibility of  the overall control scheme) 
	Let Assumptions \ref{as:2}-\ref{ass:terminal} hold. If the planning problem~\eqref{eq:milp} is feasible for $x(0)$, then the optimization problems ~\eqref{eq:milp},~\eqref{OCP_LL} remain feasible
	for the closed loop system consisting of the upper-layer moving horizon  planner~\eqref{eq:milp},~\eqref{input_lower}, the lower-layer controller~\eqref{OCP_LL} and the uncertain plant dynamics~\eqref{system}.
\end{proposition}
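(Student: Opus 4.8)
The plan is to splice together the two conditional recursive-feasibility statements already in place: Proposition~\ref{robustprop} makes $\mathcal{P}\big(x((k_p+1)M)\big)$ feasible once the lower layer delivers the contract~\eqref{eq:contract}, and the preceding proposition turns feasibility of $\mathcal{L}$~\eqref{OCP_LL} into pointwise constraint satisfaction and obstacle avoidance. Hence it suffices to prove, by induction over the planning index $k_p$ (with an inner induction over the fast steps $k\in\{k_pM,\ldots,(k_p+1)M-1\}$), that (a) $\mathcal{L}$ is feasible at every sampling instant and (b) the closed loop produces $x\big((k_p+1)M\big)-x_p(k_p+1\vert k_p+1)\in\mathbb{Z}_{i^\star}$, i.e.\ exactly~\eqref{eq:contract} for the mode $i^\star$ committed at $k_p$. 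Assumption~\ref{as:1} enters only through the existence of a gain $K$ with $A+BK$ Schur and of the terminal ingredients of Assumption~\ref{ass:terminal}; the arithmetic underneath everything is the pair of identities $\mathbb{E}_i(j+1)=\mathbb{E}_i(j)\oplus(A+BK)^j\mathbb{W}_i$ and $(A+BK)^j\mathbb{Z}_i\oplus\mathbb{E}_i(j)\subseteq\mathbb{Z}_i$ for all $j\geq 0$, which follow from~\eqref{eq:defE}, \eqref{eq:defZi} and $\mathbb{E}_i(0)=\{0\}$ and which reconcile the planner's robust-invariant tightening (sets $\mathbb{Z}_i$) with the lower layer's growing-tube tightening (sets $\mathbb{E}_i(j)$).

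\emph{Feasibility of $\mathcal{L}$ at a planning instant, and the base case.} Assume the planner is feasible at $k_pM$ with mode $i^\star$ and emits the reference~\eqref{input_lower}. I would take as candidate $z(k_pM+j\vert k_pM)=x_{\text{ref}}(k_pM+j)+(A+BK)^j\eta$ and $v(k_pM+j\vert k_pM)=u_p^\star(k_p\vert k_p)+K(A+BK)^j\eta$ with $\eta=x(k_pM)-x_{\text{ref}}(k_pM)$; by~\eqref{eq:apa} the reference segment is exactly the fast-model trajectory from $x_p^\star(k_p\vert k_p)$ under the held input $u_p^\star(k_p\vert k_p)$, so this candidate obeys~\eqref{LL_c_1} with $z(k_pM\vert k_pM)=x(k_pM)$. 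Feasibility of~\eqref{eq:pp_XO} gives $\eta\in\mathbb{Z}_{i^\star}$; \eqref{eq:pp_X}, \eqref{eq:pp_U}, \eqref{eq:pp_Y}, Assumption~\ref{ass:isc} and~\eqref{eq:xref_int_cond} put $x_{\text{ref}}(k_pM+j)\in\mathbb{X}_{i^\star}\ominus\mathbb{Z}_{i^\star}$, $u_p^\star(k_p\vert k_p)\in\mathbb{U}_{i^\star}\ominus K\mathbb{Z}_{i^\star}$ and $Cx_{\text{ref}}(k_pM+j)\notin\mathbb{O}\oplus(-C)\mathbb{Z}_{i^\star}$ for all $j=0,\ldots,M$; and since $(A+BK)^j\eta\oplus\mathbb{E}_{i^\star}(j)\subseteq(A+BK)^j\mathbb{Z}_{i^\star}\oplus\mathbb{E}_{i^\star}(j)\subseteq\mathbb{Z}_{i^\star}$, adding $(A+BK)^j\eta$ to the reference keeps all of~\eqref{LL_c_2} satisfied, the terminal constraint~\eqref{LL_c_2_d} at $j=L_{k_pM}=M$ included. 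So $\mathcal{L}$ is feasible at $k_pM$; applied at $k_p=0$, where $\mathcal{P}\big(x(0)\big)$ is feasible by hypothesis and $x_{\text{ref}}(0)=x_p^\star(0\vert 0)$, this is the base case.

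\emph{Inner induction and delivery of the contract.} For $k$ not a multiple of $M$, from a feasible $\big(\mathbf{z}^\star(k),\mathbf{v}^\star(k)\big)$ I would use the shifted candidate $z(k+1+j\vert k+1)=z^\star(k+1+j\vert k)+(A+BK)^jw(k)$ and $v(k+1+j\vert k+1)=v^\star(k+1+j\vert k)+K(A+BK)^jw(k)$; because $x(k)=z^\star(k\vert k)\in\mathbb{X}_{i^\star}$ and $u(k)=v^\star(k\vert k)\in\mathbb{U}_{i^\star}$ (preceding proposition, using $\mathbb{E}_{i^\star}(0)=\{0\}$), Assumption~\ref{as:2} gives $w(k)\in\mathbb{W}_{i^\star}$, hence $z(k+1\vert k+1)=x(k+1)$, and since $L_{k+1}=L_k-1$ the identity $\mathbb{E}_{i^\star}(j+1)=\mathbb{E}_{i^\star}(j)\oplus(A+BK)^j\mathbb{W}_{i^\star}$ collapses level $j+1$ of the old tightened constraints into level $j$ of the new ones, so~\eqref{LL_c_2} holds for $\mathcal{L}\big(x(k+1),\{x_{\text{ref}}\},i^\star,k+1\big)$, the terminal constraint included because $k+1+L_{k+1}=k+L_k$. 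Iterating to $k''=(k_p+1)M-1$, where $L_{k''}=1$ and $\mathbb{E}_{i^\star}(1)=\mathbb{W}_{i^\star}$, \eqref{LL_c_2_d} reads $z^\star\big((k_p+1)M\vert k''\big)-x_{\text{ref}}\big((k_p+1)M\big)\in\mathbb{Z}_{i^\star}\ominus\mathbb{W}_{i^\star}$; since $z^\star(k''\vert k'')=x(k'')$ and $w(k'')\in\mathbb{W}_{i^\star}$ we get $x\big((k_p+1)M\big)=z^\star\big((k_p+1)M\vert k''\big)+w(k'')$, hence $x\big((k_p+1)M\big)-x_{\text{ref}}\big((k_p+1)M\big)\in\big(\mathbb{Z}_{i^\star}\ominus\mathbb{W}_{i^\star}\big)\oplus\mathbb{W}_{i^\star}\subseteq\mathbb{Z}_{i^\star}$. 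As $x_{\text{ref}}\big((k_p+1)M\big)=A_px_p^\star(k_p\vert k_p)+B_pu_p^\star(k_p\vert k_p)=x_p^\star(k_p+1\vert k_p)$, which is precisely the node kept by the shift guess in Proposition~\ref{robustprop}, i.e.\ $x_p(k_p+1\vert k_p+1)$, this is exactly~\eqref{eq:contract}. Proposition~\ref{robustprop} then yields feasibility of $\mathcal{P}\big(x((k_p+1)M)\big)$, and feeding the (possibly re-selected) mode and new reference it emits into the planning-instant step above yields feasibility of $\mathcal{L}$ at $(k_p+1)M$, closing the induction.

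\emph{Expected main obstacle.} The delicate part is not any single inclusion but the bookkeeping of the three interleaved clocks — the planner re-solve every $M$ steps, the shrinking-horizon lower-layer re-solve every step, and the disturbance realization — while keeping the two families of tightening sets mutually consistent. One must check carefully that $i^\star$ is held fixed over a whole planning interval so that $\mathbb{W}_{i^\star},\mathbb{Z}_{i^\star},\mathbb{X}_{i^\star},\mathbb{U}_{i^\star}$ are the right sets at every fast step; that the reference from~\eqref{input_lower} supplies exactly the $M+1$ stages the cyclic horizon $L_k$ needs at a planning instant; and that the mode switch permitted at the next instant cannot break feasibility, because Proposition~\ref{robustprop}'s shift guess retains $i^\star$ while the new reference still satisfies~\eqref{eq:pp_XO} and~\eqref{eq:xref_int_cond} for whichever mode is ultimately chosen. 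Once these are pinned down, the recursion closes on the two Minkowski identities alone.
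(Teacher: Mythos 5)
Your proof is correct and follows essentially the same three-part decomposition as the paper's: (i) planner feasibility at a planning instant yields a feasible lower-layer candidate built from the held input $u_p^\star(k_p\vert k_p)$ plus the feedback $K$ acting on the deviation from $x_{\text{ref}}$ (your closed-form $(A+BK)^j\eta$ expression is exactly the paper's recursively defined candidate), (ii) between planning instants the shifted candidate perturbed by $(A+BK)^{j}w(k)$ preserves feasibility of the shrinking-horizon problem, and (iii) the terminal constraint~\eqref{LL_c_2_d} at the last fast step delivers the contract~\eqref{eq:contract}, so Proposition~\ref{robustprop} closes the loop. Your write-up is in fact somewhat more explicit than the paper's (e.g., spelling out $w(k)\in\mathbb{W}_{i^\star}$ via Assumption~\ref{as:2} and the inclusion $(\mathbb{Z}_{i^\star}\ominus\mathbb{W}_{i^\star})\oplus\mathbb{W}_{i^\star}\subseteq\mathbb{Z}_{i^\star}$), but the route is the same.
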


\begin{proof}
	The proof consists of three parts:  first it is shown that feasibility of the planning problem $\mathcal{P}\big(x(k_pM)\big)$~\eqref{eq:milp} implies feasibility of the low-layer optimization problem~\eqref{OCP_LL} $\mathcal{L}\big(x(k_pM),\lbrace x_{\text{ref}}\rbrace,i^\star,k_pM\big)$. Afterwards we verify that feasibility of
	$\mathcal{L}\big(x(k),\lbrace x_{\text{ref}}\rbrace,i,k\big)$ 
	\eqref{OCP_LL} implies: if $k+1$ is not a multiple of $M$, feasibility of  $\mathcal{L}\big(x(k+1),\lbrace x_{\text{ref}}\rbrace,i,k+1\big)$~\eqref{OCP_LL} (part 2) and otherwise feasibility of  the planning problem $\mathcal{P}\big(x(k+1)\big)$ \eqref{eq:milp} (part 3).
	
	1) If   $\mathcal{P}\big(x(k_pM)$~\eqref{eq:milp} is feasible, then  the following (suboptimal) input trajectory and state sequence   for the lower-layer problem 	 $\mathcal{L}\big(x(k_pM),\lbrace x_{\text{ref}}\rbrace,i^\star,k_pM\big)$
	\begin{align*}
	z(k\vert k)=&x(k) \\
	v(k+j\vert k)=& u_p^\star(k_p\vert k_p)+K(z(k+j\vert k)-x_\text{ref}(k+j\vert k)),\,\,j=0,\ldots,M-1, \\
	z(k+j+1\vert k)=&Az(k+j\vert k)+B  v(k+j\vert k),\,\,j=0,\ldots,M-1, 
	\end{align*} 
  satisfies all constraints of \eqref{OCP_LL} due to the inter-sample constraints $\mathbb{I}_i$ and the  consistent constraint tightening utilized at both layers, i.e., the definition of $\mathbb{E}_i$ and $\mathbb{Z}_i$, see
  \eqref{eq:defE} and \eqref{eq:defZi} and
  that $\mathbb{E}_i\subseteq \mathbb{Z}_i$.
	
	2) In the case that $k+1$ is not a multiple of $M$, i.e. no planning takes place,  the horizon $L_k$ shrinks. Due to the design of the set $\mathbb{E}_i(j)$~\eqref{eq:defE}   a feasible nominal state trajectory $\mathbf{z}(k+1)$ and a nominal input trajectory $\mathbf{v}(k+1)$
	for 	$\mathcal{L}\big(x(k+1),\lbrace x_{\text{ref}}\rbrace,i,k+1\big)$, satisfying 
	\eqref{LL_c_1} and \eqref{LL_c_2}, can be obtained from the solution of 
	$\mathcal{L}\big(x(k),\lbrace x_{\text{ref}}\rbrace,i,k\big)$:
	\begin{align*}
	z(k+j\vert k+1)=&z^\star(k+j\vert k)+(A+BK)^{j-1}w(k),\\
	v(k+j\vert k+1)=&v^\star(k+j\vert k)+K(A+BK)^{j-1}w(k).
	\end{align*}
	
	3) Finally, if $k+1$ is a multiple of $M$, i.e. the planning problem is solved at $k+1$, then feasibility
	of  	$\mathcal{L}\big(x(k),\lbrace x_{\text{ref}}\rbrace,i,k\big)$, in particular \eqref{LL_c_2_d} 
	implies 
	that  $x(k+1)-x_{\text{ref}}(k+1)\in\mathbb{Z}_{i^\star}$. This   together with Proposition \ref{robustprop} yields that the planning problem \eqref{eq:milp} is feasible at $k+1$. 
\end{proof}
\begin{remark}(Adaption of sets)~ 
\emph{
	We assume that  state and input  constraints $\{\mathbb{X}_i,\mathbb{U}_i\}$ and the tubes/contracts $\{\mathbb{E}_i,\mathbb{Z}_i\}$ are determined offline. The proposed approach can in principle be extended to allow an adaption of these sets. This could be useful for example to consider   the influence of varying  weather conditions. We do not consider such an extension in this work.}
\end{remark} 

\begin{remark}(Relaxing condition \eqref{LL_c_2_c})
\emph{
In the optimization problem \eqref{OCP_LL} the difference between the real/predicted output $y_k$/$z_k$ 
and its reference $Cx_{\text{ref}}(k)$     is restricted to the set
$C\mathbb{Z}_i \ominus C \mathbb{E}_i$, compare 
\eqref{LL_c_2_c}. This restriction is used to enable  guarantees on  the obstacle avoidance constraints \eqref{eq:all_obstacles}.
However, if the vehicle position/output (or /certain directions of it) is at time instant $k$ far way from an obstacle, then this restriction can be conservative. In principle, it is possible to relax these  constraints by generating online based on the solution of the upper-layer  sets $\mathbb{Y}_{k+j}$ resulting in less conservative output constraints:
 $Cz(k+j\vert k)\in \mathbb{Y}_{k+j}\ominus C\mathbb{E}_i$.
}\end{remark}

\begin{remark}(More general tube scheme)~
\emph{
We use a  basic tube scheme with a single gain $K$ and focused on linear dynamics. One could use a more general scheme with multiple gains,   see~\cite{Kogel2020robust},
or a more complex tube control law, see e.g. \cite{rakovic2011fully,Rakovic2012invention}. Also, an extension to nonlinear lower-layer dynamics is in principle possible using for example \cite{Koegel2015acc,villanueva2017computing}.}
\end{remark}

%
%
%

\subsection{MILP solution of the planning problem} \label{Scheduler}
In the following, we discuss how the non-convex  optimization problem $\mathcal{P}$~\eqref{eq:milp} can be reformulated using the big-M method \cite{ibrahim2019hierarchical,ibrahim2020learning} into an MILP. Note that $\mathcal{P}$~\eqref{eq:milp} is
 non-convex due to two reasons: firstly, due to the operation mode $i$,  and secondly   the non-convex obstacles avoidance constraints \eqref{obstacle_constraints} result in the non-convex constraints \eqref{eq:pp_U} - \eqref{eq:pp_Xf}.
  
\emph{Scheduling of operation modes}: In the proposed hierarchical scheme the operating modes of the vehicle given in the form of different constraint sets $\mathbb{X}_i$ and $\mathbb{U}_i$ directly influences 
the uncertainty bounds $w \in \mathbb{W}_i$, see Assumption~\ref{as:2}. 
 The lower-layer controller guarantees  constraint satisfaction and guarantees   bounds on the tracking error in form of a set   $\mathbb{Z}_i$, which depends on the    operation mode.  So, the sets appearing in the   initial constraint \eqref{eq:pp_XO}  and the tightened state/input constraints \eqref{eq:pp_X}, \eqref{eq:pp_U}  are of the form
\begin{align*}
\mathbb{Z}_i&\equiv \{x|F^z_i x \leq G^z_i\},\quad \forall i \in \{1,\ldots,N_w\},\\
\mathbb{X}_i\ominus \mathbb{Z}_i &\equiv \{x|F^x_i x \leq G^x_i\},\quad \forall i \in \{1,\ldots,N_w\},\\
\mathbb{U}_i\ominus K\mathbb{Z}_i &\equiv \{u|F^u_i u \leq G^u_i\},\quad \forall i \in \{1,\ldots,N_w\}.
\end{align*}
As result, this contract provides the planner an extra degree of freedom to reduce the planning conservatism by switching between different operating modes of the lower-layer controller.
\hide{
\color{magenta}
In this work, we use the big-M method to formulate the modes scheduling as  M-sided polygons:  
\begin{align*}
    F^x_i x \leq G^x_i +& M_{\text{big}}\big(1-d_i(k)\big),\quad \forall i \in \{1,\ldots,N_w\}, 
    \\F^u_i u \leq G^u_i +& M_{\text{big}}\big(1-d_i(k)\big),\quad \forall i \in \{1,\ldots,N_w\}, 
\end{align*} 
we use a large positive number $M_\text{big}$ to deactivate the mode $i$-th at time $k$ by relaxing its constraints using the binary decision variable $d_i(k)$.

Practically, the vehicle has to move at only one mode, so we exploit an extra constraint
$$\sum_{i = 1}^{N_w}d_i(k) = 1,$$
to insure one active mode at each planning sample.

\color{red}
Is there only one $d$? Why do we need the dependence on $k$? We also have the initial constraint \eqref{eq:pp_XO}, which depends on $i$. 
MK: Should we add that one could alternatively solve for each mode its problem in parallel?
\color{black} }

We use the so called big-M method to formulate the mode scheduling as:  
\begin{align*}
    F^z_i x-x_p \leq &G^z_i + M_{\text{big}} \big(1-d_i\big)\mathbf{1},\quad \forall i \in \{1,\ldots,N_w\}, \\
    F^x_i x \leq &G^x_i + M_{\text{big}} \big(1-d_i\big)\mathbf{1},\quad \forall i \in \{1,\ldots,N_w\},     \\
    F^u_i u \leq & G^u_i + M_{\text{big}} \big(1-d_i\big)\mathbf{1},\quad \forall i \in \{1,\ldots,N_w\},  \\
    \sum_{i = 1}^{N_w}d_i=&1.
\end{align*} 
Here we use a large positive number $M_\text{big}$ to deactivate the constraints of the  $i$-th mode  by relaxing its constraints using the binary decision variable $d_i$. The  last constraint guarantees that exactly one  mode is active in the planning. 

\emph{Obstacle avoidance constraints:}
\hide{
\color{magenta}
For each mode, the tracking error set $\mathbb{Z}_i$ is used to tight the state constraint $\mathbb{X}_i$ and to enlarge boundaries of the obstacle $\mathbb{O}$, compare \eqref{eq:milp}.
In this work, the non-convex avoidance constraints \eqref{obstacle_constraints}  are approximated by convex polygons using MILP with additional binary variables $b^i_s(k)$:\\
\begin{align*}
\forall s \in \{1,\ldots,S\}, \forall k \in \{1,\ldots,N\}&, i \in \{1,...,N_{w}\}:\\
\big(p_x(k) - p_x^o(k)\big)\cdot\cos\left(\dfrac{2\pi s}{S}\right)+&\big(p_y(k) - p_y^o(k)\big)\cdot\sin\left(\dfrac{2\pi s}{S}\right)\geq \delta^i_\text{safe} - M_\text{big}b^i_s(k).
\end{align*}
where we add a safety distance $\delta^i_\text{safe}$ between the vehicle position $\big(p_x(k),p_y(k)\big)$ and the obstacle $\big(p_x^o(k),p_y^o(k)\big)$ for the vehicle mode $i$.
Also, we impose an extra constraint 
\begin{align*}
\sum_{s=1}^{S} b^i_s(k) \leq S-d_i(k),
\end{align*}
to ensure that at least one active constraint for mode $i$, \ie\ $d_i(k) = 1$.
So the the planner can choose different low-layer
controller modes $i$ to adjust the obstacle boundary $\delta^i_\text{safe}$ by changing the vehicle velocity, see Fig. \ref{fig:tube_2d_3d}.

%
}
For each mode, the tracking error set $\mathbb{Z}_i$ is used to   enlarge boundaries of  the obstacles $\mathbb{O}$, compare \eqref{eq:milp}.
The enlarged,  non-convex avoidance constraints \eqref{obstacle_constraints} 
can be rewritten/over-approximated by 
\begin{align*}
 \forall j=0, \ldots,N-1,\,\forall, \ell=1\ldots,H  : 
 \exists a\in 1,\ldots, q_j\mbox{ s.t. }
 E_{\ell,a}Cx_p(k_p+j\vert k_p)\geq f_{\ell,a}^i.
\end{align*}
In this case one can enforce this constraint for the active mode $i$ by  using   additional binary variables $b^i_{\ell,a}(j)$ by requiring 
for $  j=0, \ldots,N-1$ and $  \ell=1\ldots,H$ 
that  
\begin{align*}
E_{\ell,a}Cx_p(k_p+j\vert k_p)&\geq f_{\ell,a}^i-M_{\text{big}} b^i_{\ell,a}(j),\\
\sum\limits_{a=1}^{q_j} b^i_{\ell,a}(j) &\leq q ^\ell-d_i.
\end{align*}
Here  we impose an extra constraint
to ensure that at least one active constraint for mode $i$, where $q ^\ell$ is number of faces of each obstacle. 
Loosely speaking  the planner can choose different low-layer
controller modes $i$ to adjust the obstacle boundary $f_{\ell,a}^i$ by changing the operation modes. This is illustrated in Fig. \ref{fig:label_two_modes}(b) considering that the operation modes corresponds to different velocities. 

\section{Unmanned Aerial Vehicle  Example}\label{sec:ex_res}
We consider a quadcopter that should fly from a starting to a goal point without hitting obstacles, c.f.  Figure \ref{fig:coordinates}.
\begin{figure}
	\begin{center}
		\includegraphics[width=0.75\columnwidth]{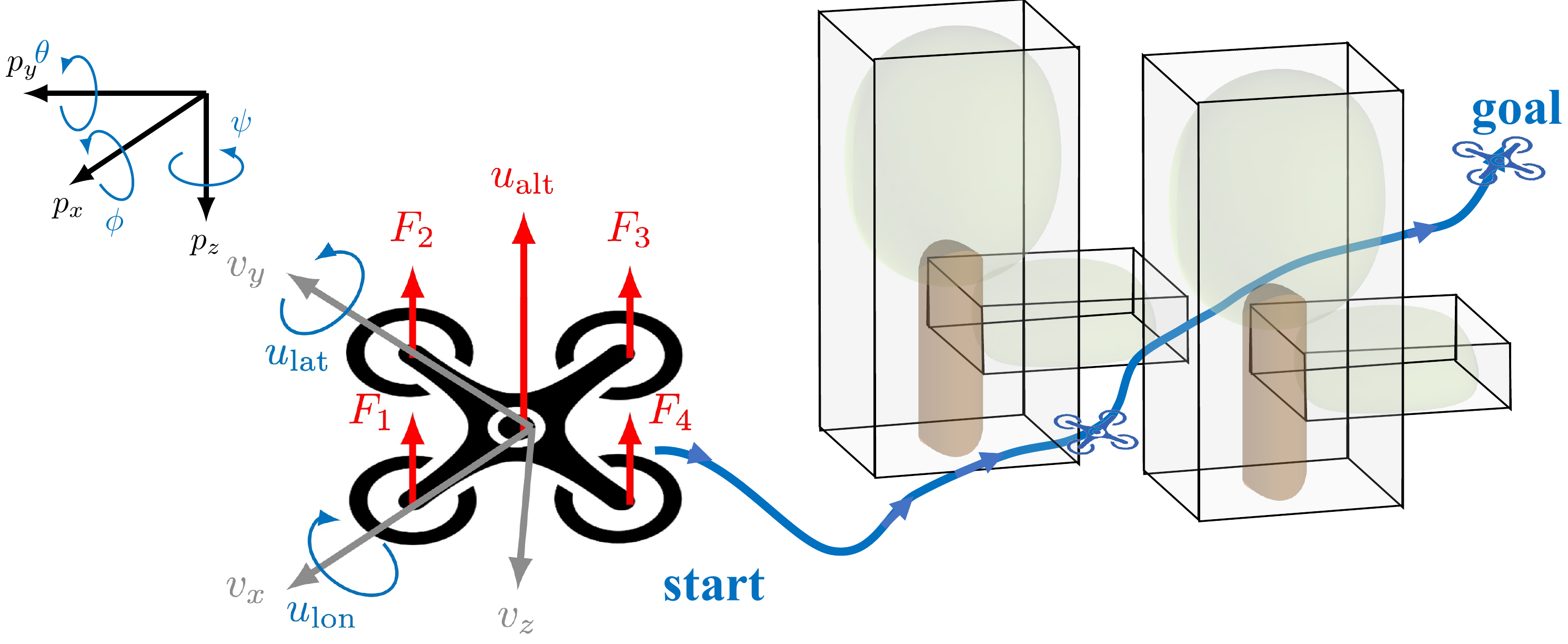}
		\caption{Illustration of the quadcopter state, forces, and moments described in the earth ($p_x$, $p_y$, $p_z$, $\theta$, $\varphi$, $\phi$) and body frame. The considered inputs are the lateral and longitudinal moments are provided by the four rotors resulting in the forces $F_i$.}  
		\label{fig:coordinates}
	\end{center}	
\end{figure}
A linearized model, as presented in \cite{alexis2016robust} based on a  nonlinear dynamic model is used, c.f. Appendix \ref{sec:ex}. The states of the quadcopter are roll and pitch angles $(\phi,\theta)$, roll and pitch rates $(w_x,w_y)$, 3D position $(p_x,p_y,p_z)$, and 3D velocities $(v_x,v_y,v_z)$.  The linearized model has three inputs: $u_{\text{lon}} = \theta_c$, $u_{\text{lat}} = \phi_c$ and $u_{\text{alt}} = T_c$.
The input  and states are constrained to:
\begin{align}
\left [\begin{smallmatrix}-1.5\, \text{m/s} \\ -\pi/4\, \text{rad} \\ -\pi/4\, \text{rad}	\end{smallmatrix} \right ]\leq
\left [\begin{smallmatrix} v_x, \, v_y, \, v_z\\\phi, \, \theta, \\ \phi_c, \, \theta_c  \end{smallmatrix} \right ]\leq
\left [\begin{smallmatrix}1.5\, \text{m/s} \\ \pi/4\, \text{rad} \\ \pi/4\, \text{rad}	\end{smallmatrix} \right ].
\end{align}

\hide{\subsection{Simulation Results}\label{sec:sim}}
The model is discretized with a  sampling time of $T_p = 0.5s$ for the planning problem and $T_t = 0.05s$, i.e. $M=10$, for the cyclic horizon MPC controller \eqref{eq:apa}. YALMIP~\cite{lofberg2004yalmip} is used to implement and formulate the planning and control problems, exploiting Gurobi~\cite{gurobi} for the solution of the optimization problems. The required sets are calculated via the MPT toolbox~\cite{mpt}. 

Figure \ref{fig:tube_2d_3d_nor} shows simulations for the case that only a single low layer control mode is used. 
\begin{figure}[htb]
	\centering
	\captionsetup[subfloat]{font=normalsize,labelfont={bf,sf}}
		\includegraphics[width=1\linewidth]{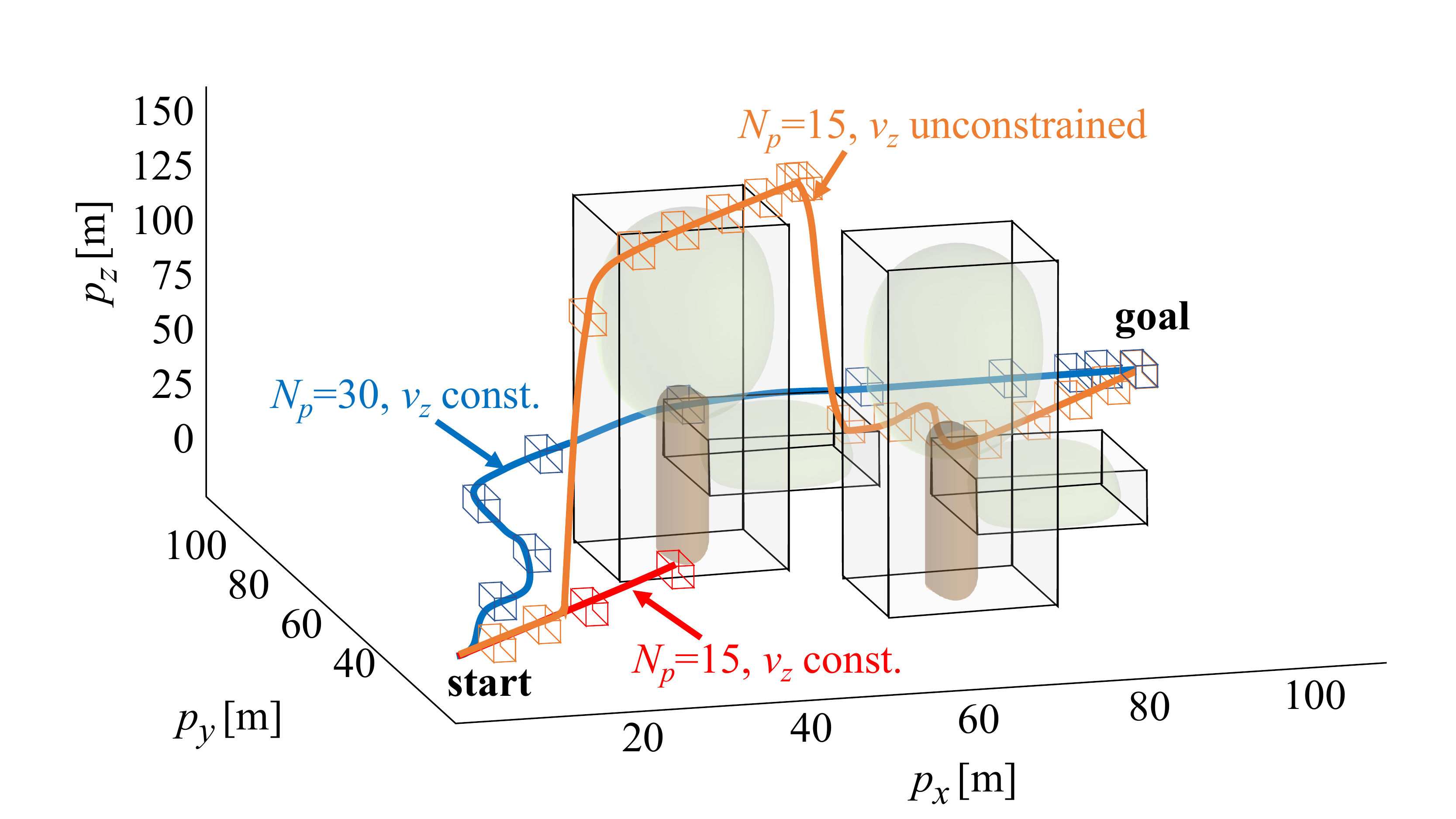}
	\caption{Disturbance free closed-loop simulation results for a single MPC controller mode. 
	}\label{fig:tube_2d_3d_nor} 
\end{figure}
As can be seen, for a small planning horizon of $N_p=15$ no path around the obstacle can be found. Only an increase of the planning horizon to $N_p=30$, or the removal of the vertical velocity constraint on $v_z$  allows the controlled vehicle to reach the goal. Note that in all cases no collisions occur. The (maximum) computation times to solve the planning problem are $\text{T}_\text{com} =$ 0.4 s for $N=15$, and $\text{T}_\text{com} =$ 60 s for  $N=30$, which is well above the desired re-planning time of 0.5 s.\footnote{The computation times are carried out on an Intel$^\circledR$ Core$^{TM}$ i7-8550U CPU which operates on 1.99GHz.}.

Figure \ref{fig:label_two_modes} shows simulation results using two operation modes for the low-layer MPC controller. They correspond to a fast operation mode, given by  $-1.5\, \text{m/s} \, \leq (v_x, \, v_y)\leq 1.5\, \text{m/s}$ and a slow operation mode given by $-1.0\, \text{m/s} \, \leq (v_x, \, v_y)\leq 1.0\, \text{m/s}$.  The fast operation mode corresponds to a large uncertainty set, while slow operation mode  leads to a smaller uncertainty bound $\mathbb{W}_2$ and thus smaller sets $\mathbb{E}_2$ and $\mathbb{Z}_2$.
\begin{figure}[htb]
\centering
\begin{minipage}[c]{.49\textwidth}
  \centering
     \begin{subfigure}[b]{\textwidth}
         \centering
         \includegraphics[width=\textwidth]{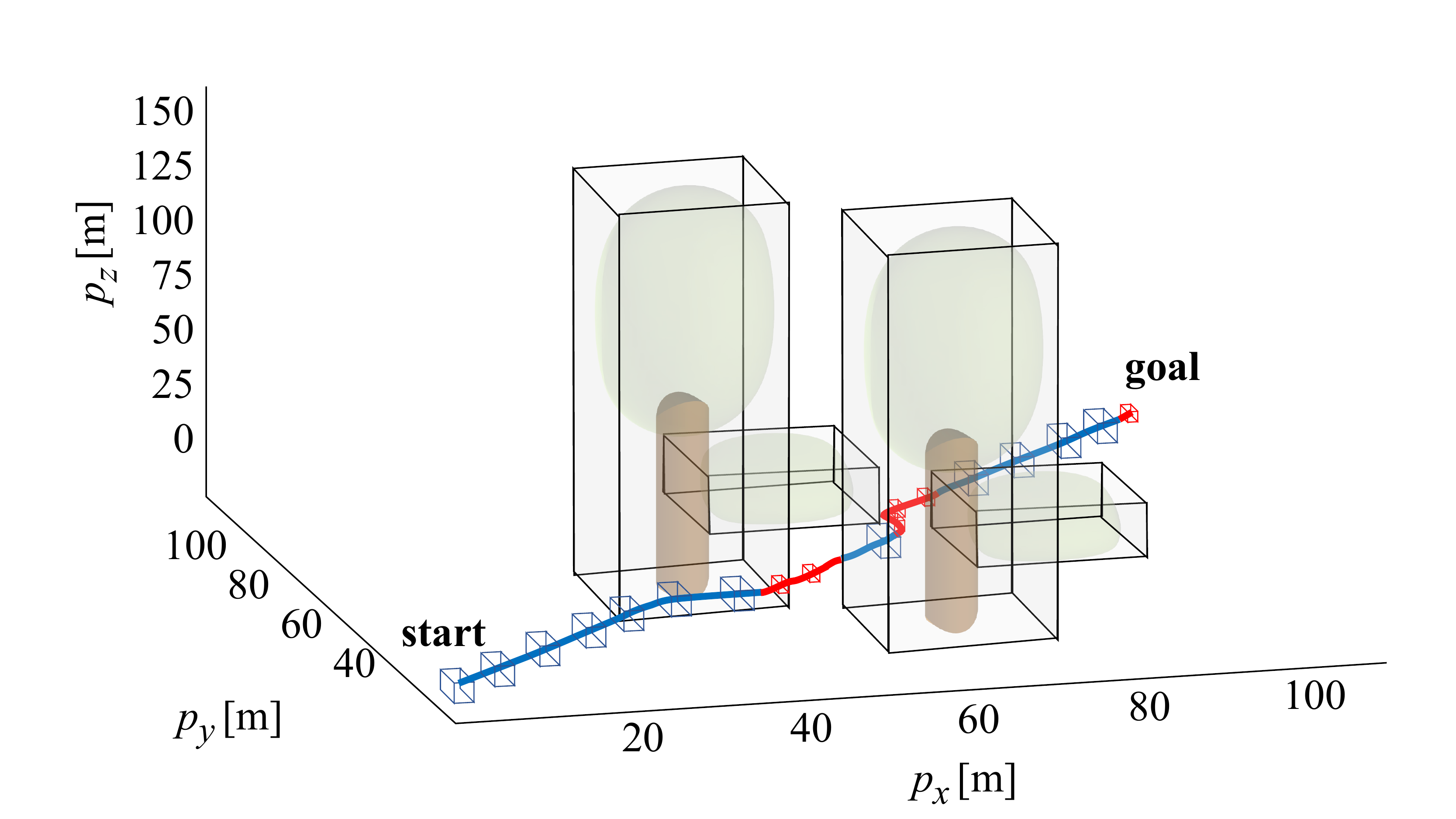}
      \end{subfigure}     
          \\
     \begin{subfigure}[b]{\textwidth}
         \centering
         \includegraphics[width=\textwidth]{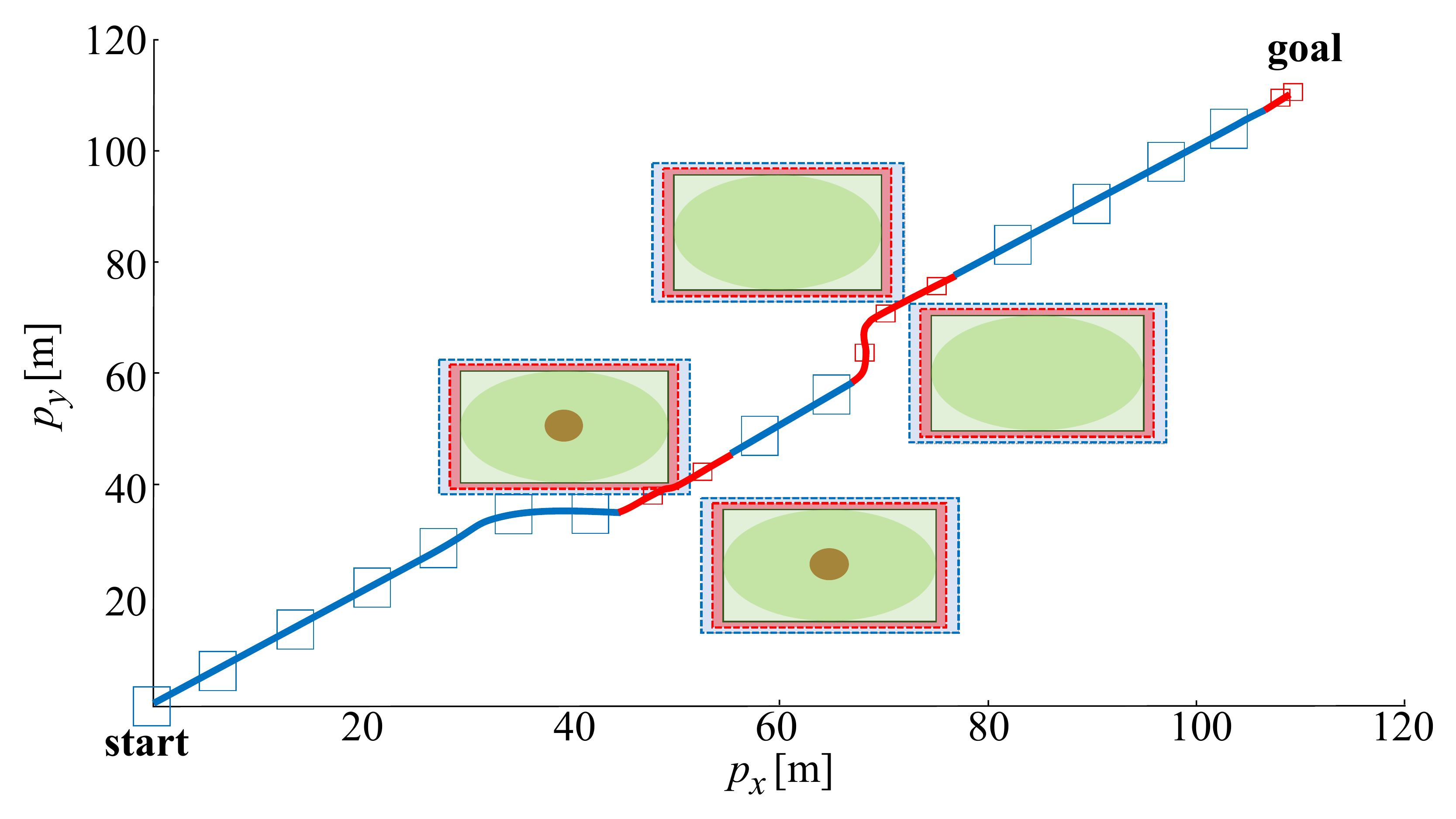}
        \captionsetup{font=normalsize,labelfont={bf,sf}}
  \caption{3D and top down view. } 
  \end{subfigure}
\end{minipage}%
\hfill
\begin{minipage}[c]{.49\textwidth}
  \centering
         \begin{subfigure}[b]{\textwidth}
            \includegraphics[width=\linewidth]{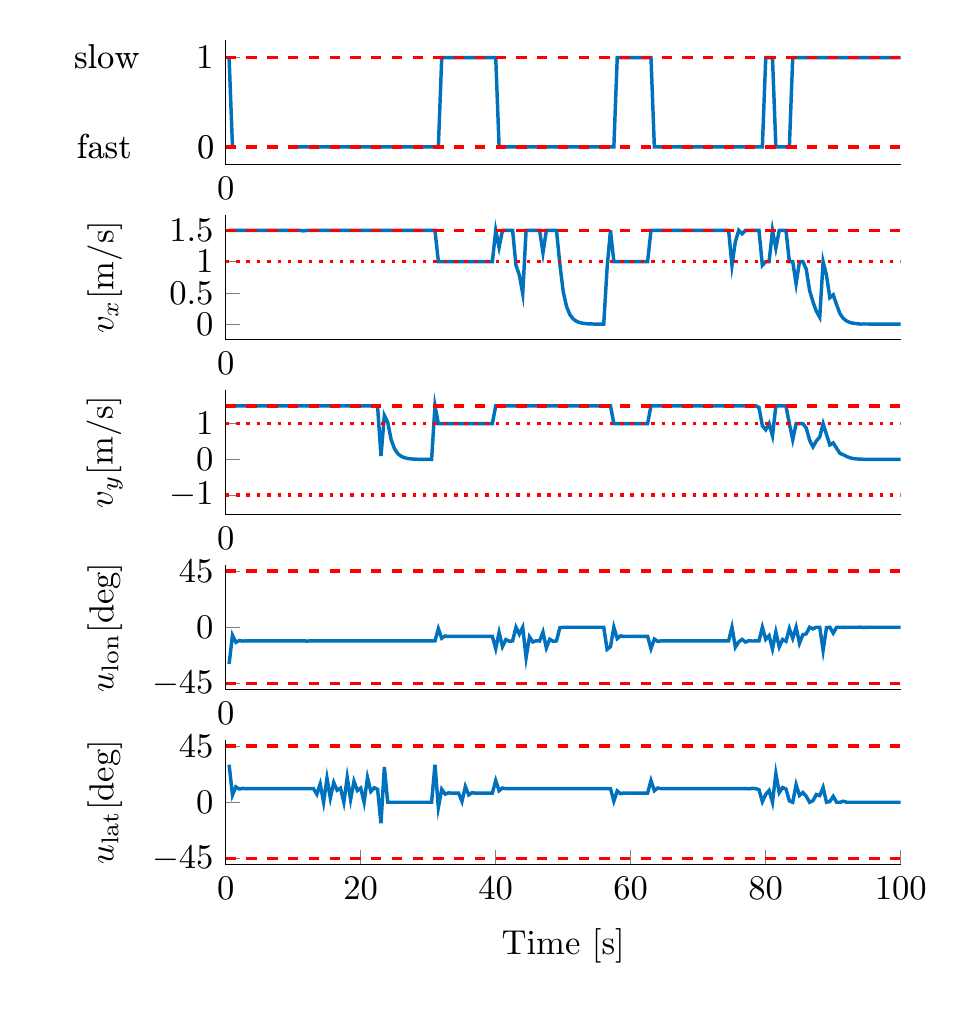}
        \captionsetup{font=normalsize,labelfont={bf,sf}}
      \caption{Selected operation mode, input, and speed profile}
      \end{subfigure}     
\end{minipage}
 \captionof{figure}{(a): Using two operating modes allows the quadcopter to reach the goal on a direct way. 
  (b): The reference planner switches between the two operation modes, resulting in different velocity constraints and uncertainty bounds.}\label{fig:label_two_modes}
\end{figure}

Figure \ref{fig:disturbances} shows simulation results for large wind disturbances, which are not considered in the controller explicitly. As can be seen, the hierarchical control strategy is able to achieve the goal, while avoiding the obstacles and satisfying the input constraints. 
\begin{figure}
\centering
\begin{minipage}[c]{.49\textwidth}
  \centering
     \begin{subfigure}[b]{\textwidth}
         \centering
         \includegraphics[width=\textwidth]{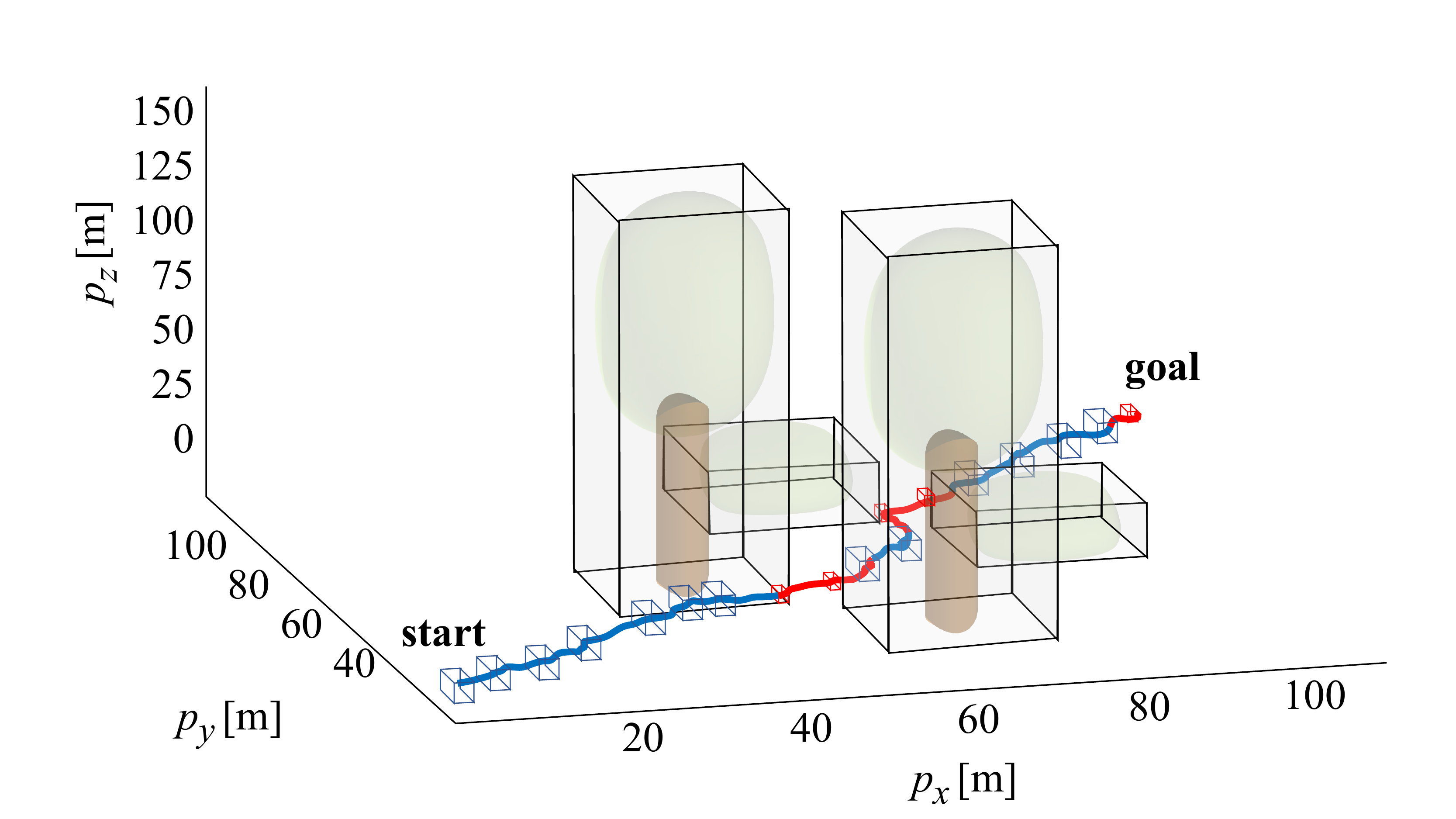}
         \captionsetup{font=normalsize,labelfont={bf,sf}}
     \end{subfigure}
    \begin{subfigure}[b]{\textwidth}
         \centering
         \includegraphics[width=\textwidth]{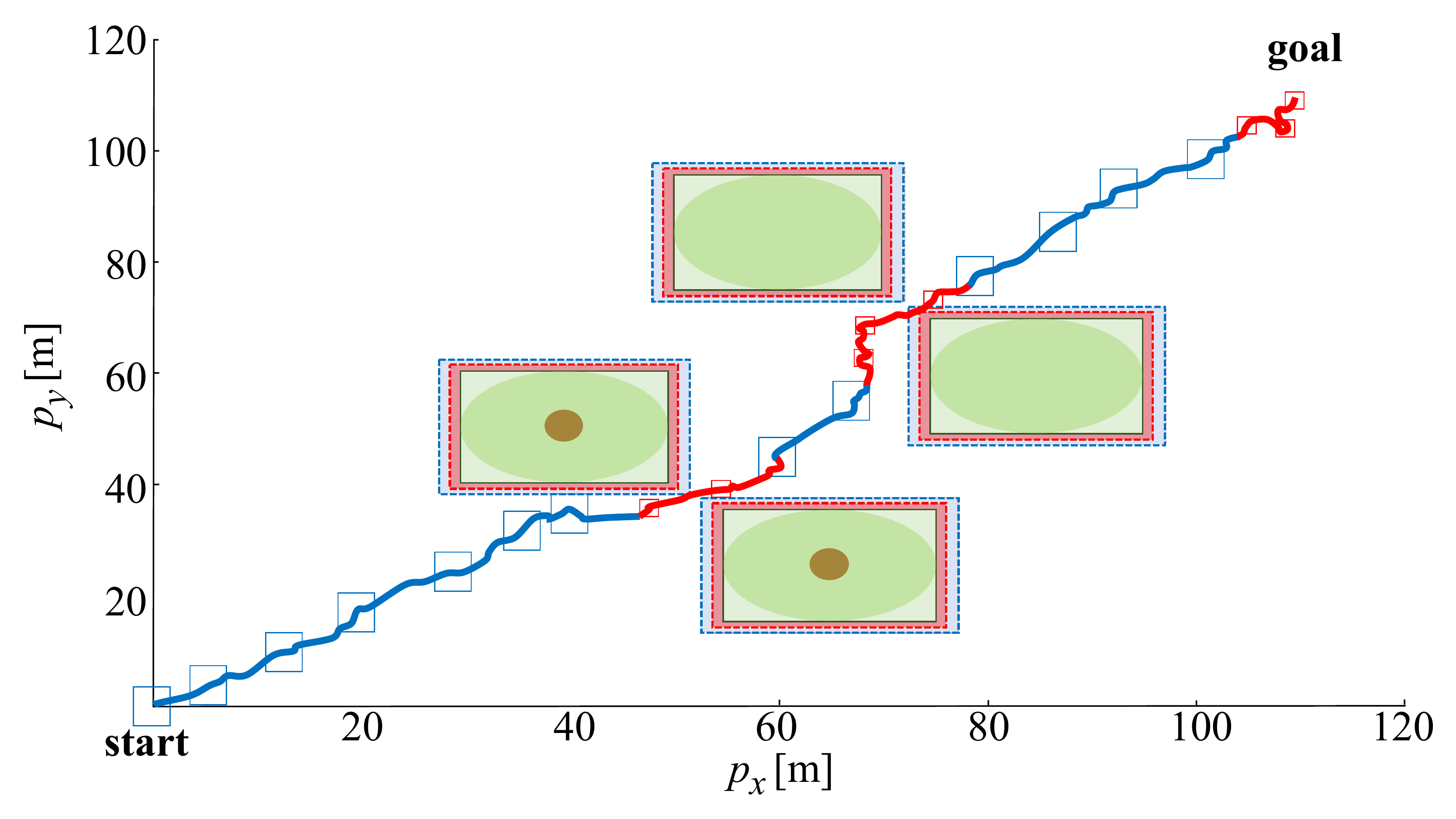}
         \captionsetup{font=normalsize,labelfont={bf,sf}}
         \caption{3D and top down view.}
     \end{subfigure}
\end{minipage}%
\hfill
\begin{minipage}[c]{.49\textwidth}
     \begin{subfigure}[b]{\textwidth}
  \centering
  \includegraphics[width=\linewidth]{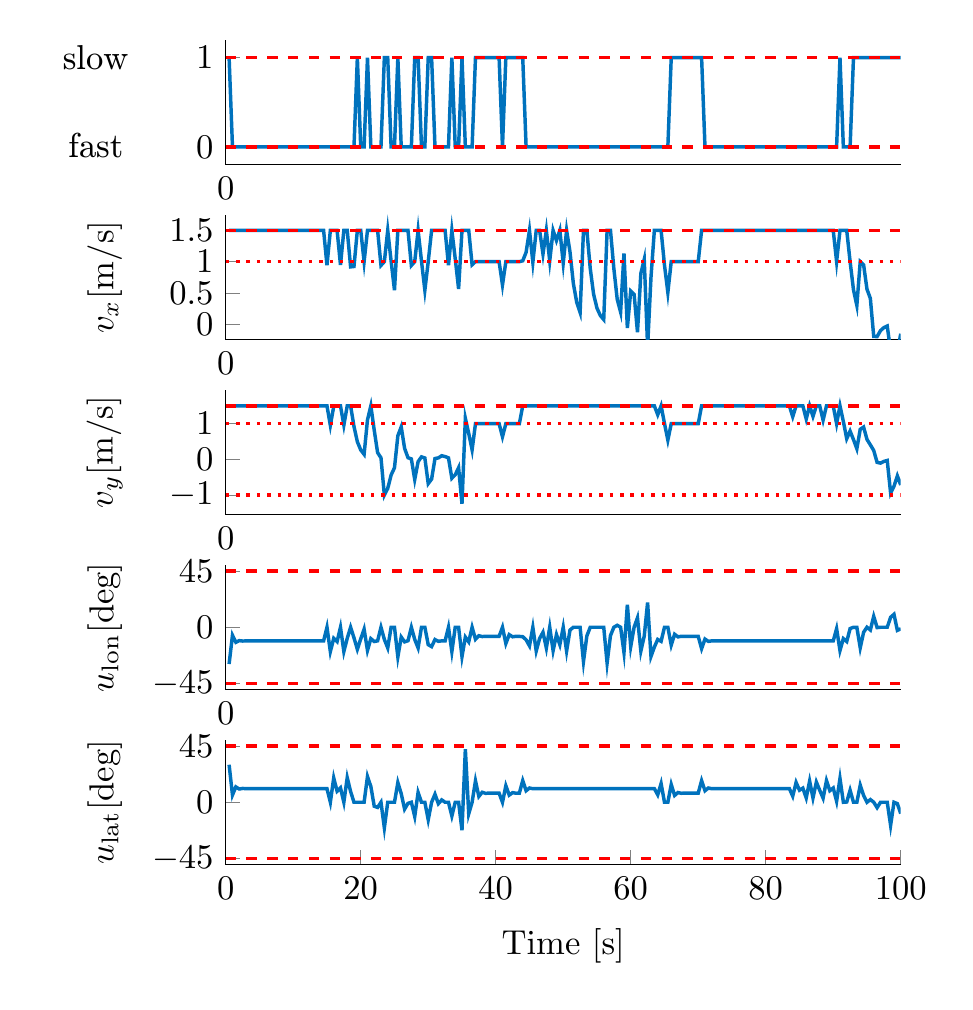}
        \captionsetup{font=normalsize,labelfont={bf,sf}}
  \caption{Selected operation mode, input, and speed profile}
\end{subfigure}
\end{minipage}
  \caption{(a): Strong wind disturbance case: the reference planner allows to reach the goal and avoid obstacles despite strong wind disturbance. 
  (b): The reference planner switches between the two operation modes, resulting in different velocity constraints and uncertainty bounds.}
  \label{fig:disturbances}
\end{figure}

Summarizing, introducing additional control modes allows avoiding conservative behavior while satisfying constraints and being computationally feasible.

\section{Conclusions and outlook}
\label{sec.Conclusion}
We propose a tightly integrated hierarchical predictive control and planning approach. 
Both planner and controller repeatedly solve moving horizon optimal control problems. The upper- and lower-layers exploit different ``contracts'' (guaranteed uncertainty bounds), i.e., the planner can choose other low-layer controller regions/modes instead of using a fixed safety corridor to capture the controller capabilities. 
For example, the planner can choose slow-speed movement with high precision or a fast speed with large uncertainty bounds. 

The planning reference  is determined by solving a moving horizon optimization problem considering a simplified model. It exploits constraints tightening, which represents the lower-layer tracking capabilities in the form of the precision contracts. The resulting planning algorithm can be reformulated as a MILP, allowing for efficient and reliable solutions. 
The low-layer tube-based MPC controller utilizes a cyclic horizon and results in a  convex optimization problem. It guarantees constraint satisfaction and the desired tracking accuracy for the different modes. Moreover, it operates at a faster time scale. 
We derived conditions that ensure compatibility between the planning and control layers to guarantee recursive feasibility and ensure the satisfaction of constraints and obstacle avoidance.

Simulation results demonstrated the efficiency and applicability of the proposed hierarchical strategy.
First, the contract option provides significant advantages, \eg\  it leads to a less conservative solution. 
Moreover, the hierarchical decomposition of the challenging vehicle control/planning problem leads to a decrease in the computational cost. It allows the implementation of robust control on-board while providing guarantees.

Possible extension are the consideration of  ellipsoidal tube MPC methods \cite{villanueva2017computing,hu2018real}. In this case, the lower controller online sends the tube parameterization to the upper layer. Therefore, the planner can predict a possible uncertainty evaluation over the planning horizon.

We also aim to experimentally evaluate the approach, implementing the upper-layer planner and the lower-layer MPC controller on computationally limited systems.
\bibliography{bib_contract}{}

\appendix
\section{Quadcopter model}\label{sec:ex}
The quadcopter states and inputs are represented in two different coordinate systems, e.g., earth and body fixed frame, see Fig.~\ref{fig:coordinates}. The resulting nonlinear dynamics  are given by \cite{alexis2016robust}:
\begin{align}
\begin{bmatrix}
m_t\mathbf{I}_{3 \times 3} & {0} \\ {0} & \mathbf{I}
\end{bmatrix}
\begin{bmatrix} \dot{{V}} \\ {W} \end{bmatrix}
+
\begin{bmatrix}
{W} \times m_t {V} \\  {W} \times \mathbf{I} {W}
\end{bmatrix}
=
\begin{bmatrix}{F} \\ {T} \end{bmatrix},
\end{align}\label{eq:lin_model_q}
where $m_t$ and ${I}$ are the mass and inertia matrix, ${V}$ and ${W}$ are the linear and angular velocities expressed in the body-fixed frame. ${F}$ and ${T}$ are the applied forces and moments. The model is linearized assuming decoupling of the translational and the attitude dynamics \cite{alexis2016robust}, leading to
\begin{subequations}
	\begin{align}
	\dot{x}_{\text{3D}} &= A_{\text{3D}}x_{\text{3D}}+B_{\text{3D}}u_{\text{3D}},\\
	y_{\text{3D}}  &=C_{\text{3D}}x_{\text{3D}}+D_{\text{3D}}u_{\text{3D}},
	\end{align}
\end{subequations}
with: $
x_{\text{3D}} = [x^\top_{\text{lon}}\, x^\top_{\text{lat}}\, x^\top_{\text{alt}}]^\top,\quad 
u_{\text{3D}} = [u^\top_{\text{lon}}\, u^\top_{\text{lat}}\, u^\top_{\text{alt}}]^\top,$ $
A_{\text{3D}} = \left[\begin{smallmatrix}A_{\text{lon}} & 0 & 0 \\0 & A_{\text{lat}} & 0\\ 0 & 0& A_{\text{alt}} \end{smallmatrix}\right],$ $
B_{\text{3D}} = \left[\begin{smallmatrix}B_{\text{lon}} & 0 & 0 \\ 0 & B_{\text{lat}} & 0\\ 0 & 0 & B_{\text{alt}} \end{smallmatrix}\right].
$ 
The corresponding longitudinal, lateral and vertical sub-dynamics  are given by:
\begin{subequations}
	\begin{align}
	\dot{x}_{\text{lon}} &= A_{\text{lon}}x_{\text{lon}}+B_{\text{lon}}u_{\text{lon}},\\
	\dot{x}_{\text{lat}} &= A_{\text{lat}}x_{\text{lat}}+B_{\text{lat}}u_{\text{lat}},\\
	\dot{x}_{\text{alt}} &= A_{\text{alt}}x_{\text{alt}}+B_{\text{alt}}u_{\text{alt}},
	\end{align}
\end{subequations}
with the matrices
$
A_{\text{lon}}=\left[\begin{smallmatrix}
0 & 1             & 0 & 0\\
0 & -\lambda_{x} & -g& 0\\
0 & 0 & 0 & 1\\
0 & 0 & -a_{w_x,\theta} & -a_{w_x,w_x}
\end{smallmatrix}\right],$ $ B_{\text{lon}}=\left[\begin{smallmatrix} 0\\ 0\\ 0 \\ b_{x}\end{smallmatrix}\right],
$
where longitudinal state is $x_{\text{lon}} = [p_x\, v_x\, \theta\, w_x]^\top$, with the input $u_{\text{lon}} = \theta_c$.
The lateral state is $x_{\text{lat}} = [p_y\, v_y\, \phi\, w_y]^\top$, with the input $u_{\text{lat}} = \phi_c$ and the matrices
$
A_{\text{lat}}=\left[\begin{smallmatrix}
0 & 1 & 0 & 0\\
0 & -\lambda_y & g & 0\\
0 & 0 & 0 & 1\\
0 & 0 & -a_{w_y,\phi} & -a_{w_y,w_y}
\end{smallmatrix}\right],$ $ B_{\text{lat}}=\left[\begin{smallmatrix} 0\\ 0\\ 0 \\ b_{y}\end{smallmatrix}\right]. 
$ 
Finally, the matrices for vertical (altitude) dynamics are given by
$
A_{\text{alt}}=
\left[\begin{smallmatrix}
0 & 1 \\
0 & -\lambda_z 
\end{smallmatrix}\right],$ $ B_{\text{alt}}=
\left[\begin{smallmatrix} 0 \\ b_z\end{smallmatrix}\right],
$
with the vertical state $x_{\text{alt}} = [p_z\, v_z]^\top$, and the input is $u_{\text{alt}} = T_c$.  
The overall UAV states are the roll and pitch angles $(\phi,\theta)$, roll and pitch rates $(w_x,w_y)$, the positions $(p_x,p_y,p_z)$, and the  velocities $(v_x,v_y,v_z)$. 
The  parameters $\lambda_{x}, \lambda_y, \lambda_z, a_{w_x,\theta},  a_{w_x,w_x}, a_{w_y,\phi}, a_{w_y,w_y}, b_{x}, b_{y}, b_{z}$ can be found in  \cite{alexis2016robust}.

\end{document}